\title{Asymptotically Optimal Hardness for $k$-Set Packing and $k$-Matroid Intersection}
\author{Euiwoong Lee \thanks{University of Michigan 
(\href{mailto:euiwoong@umich.edu}{euiwoong@umich.edu})} 
\and Ola Svensson \thanks{Ecole Polytechnique Fédérale de Lausanne
(\href{mailto:ola.svenssong@epfl.ch}{ola.svenssong@epfl.ch},
\href{mailto:theophile.thiery@epfl.ch}{theophile.thiery@epfl.ch})} 
\and Theophile Thiery \footnotemark[2]}
\definecolor{shadecolor}{named}{WhiteSmoke}
\begin{document}
    \maketitle
    \begin{abstract}
        For any $\e > 0$, we prove that $k$-Dimensional Matching is hard to approximate within a factor of $k/(12 + \e)$ for large $k$ unless $\NP \subseteq \BPP$. Listed in Karp's 21 $\NP$-complete problems, $k$-Dimensional Matching is a benchmark computational complexity problem which we find as a special case of many constrained optimization problems over independence systems including: $k$-Set Packing, $k$-Matroid Intersection, and Matroid $k$-Parity.
        For all the aforementioned problems, the best known lower bound was a $\Omega(k /\log(k))$-hardness by Hazan, Safra, and Schwartz. In contrast, state-of-the-art algorithms achieved an approximation of $O(k)$.
        Our result narrows down this gap to a constant and thus provides a rationale for the observed algorithmic difficulties. The crux of our result hinges on a novel approximation preserving gadget from $R$-degree bounded $k$-CSPs over alphabet size $R$ to $kR$-Dimensional Matching. Along the way, we prove that $R$-degree bounded $k$-CSPs over alphabet size $R$ are hard to approximate within a factor $\Omega_k(R)$ using known randomised sparsification methods for CSPs.
    \end{abstract}

    \section{Introduction}

        

        The $k$-dimensional matching problem consists of finding a maximum collection of disjoint edges in a $k$-partite hypergraph where each edge has size $k$.
        Cited amongst Karp's list of 21 $\NP$-complete problems, it is a benchmark problem for algorithms and approximability results. In particular, it models the maximum bipartite matching problem for $k = 2$ and is a special case of $k$-Set Packing and of $k$-Matroid Intersection.         
        Both problems are central constrained optimisation problems that have received considerable attention over the past years with notable contributions on the algorithmic side as evidenced by: \cite{Hurkens:1989:Size,Halldorson:1995:Approximating,Berman:2000:d/2,Cygan:2013:Improved, Neuwohner:2021:Improved, Neuwohner:2023:Passing, Thiery:2023:Improved, Lee:2013:Matroid-Matching-journal, Linhares:2020:Multi-matroid-intersection}.
        For these problems, the state-of-the-art approximation ratios are of the form $O(k)$. Cygan \cite{Cygan:2013:Improved} designed a $\frac{k+1}{3}$-approximation algorithm for $k$-Set Packing while Lee, Sviridenko and Vondr\'ak \cite{Lee:2013:Matroid-Matching-journal} obtained a $\frac{k}{2}$-approximation algorithm for $k$-Matroid Intersection.
        In contrast, on the hardness front, the best lower bound for \emph{all} these problems remains $\Omega(k/\log(k))$ by Hazan et al. \cite{Hazan:2006:Complexity} who proved that $k$-Dimensional matching is $\NP$-hard to approximate within the same ratio. For small values of $k$, Berman and Karpinski \cite{DBLP:journals/eccc/ECCC-TR03-008} showed it is $\NP$-hard to approximate 
        $k$-Dimensional Matching beyond a factor $98/97, 54/53, 30/29$ for $k = 3, 4, 5, 6$ respectively. 
        Our main result is the following:
        \begin{shaded}
        \begin{restatable}{theorem}{Main}
        \label{thm:main-matching}
            Unless $\mathbf{NP} \subseteq \mathbf{BPP}$, for any constant $\eps > 0$ and sufficiently large $k \geq k_0(\eps)$, there is no polynomial-time algorithm that approximates $k$-Dimensional Matching within a factor of $k/(12 + \eps)$.
        \end{restatable}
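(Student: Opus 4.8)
The plan is to compose two reductions and then optimize a single numerical trade-off over a parameter. The first reduction produces, for every fixed arity $t$ and all large $R$, a hard instance of \emph{$R$-degree bounded $t$-CSP over alphabet $[R]$}: assuming $\NP \subseteq \BPP$, one cannot tell an instance that is $(1-o(1))$-satisfiable from one that is $O_t(1/R)$-satisfiable, where each variable lies in at most $R$ constraints. The second reduction is the approximation-preserving gadget, turning such a $t$-CSP instance $\Psi$ (with $N$ constraints) into a $tR$-dimensional matching instance $H$ with $\mathrm{opt}(H) = (1-o(1))N$ in the YES case and $\mathrm{opt}(H) = O_t(N/R)$ in the NO case. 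Composing the two, $tR$-dimensional matching is hard to approximate within $\Omega_t(R)$; writing $d = tR$ for the dimension, this is a factor $d/(C_t + o(1))$, where the constant $C_t$ collects the soundness constant of the CSP and the ``decoding loss'' of the gadget. One then chooses the arity $t$ minimizing $C_t$; this minimum turns out to be $12$, and the $o(1)$ slack --- together with a trivial padding step that extends the construction from dimensions divisible by $t$ to all large dimensions --- is exactly what the $\eps$ absorbs.

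\textbf{Step 1: the hard CSP.} I would start from a constant-arity, constant-alphabet CSP that the PCP theorem makes NP-hard with near-perfect completeness and soundness matching the inverse alphabet size --- for instance MAX-$t$-LIN over $\mathbb Z_R$ with $t \ge 3$, which is NP-hard to approximate within $R - o(R)$. These instances have unbounded degree, so I would apply the standard randomized sparsification for CSPs: subsample the constraint multiset down to $O(n \log R)$ constraints, split the few remaining high-degree variables into independent copies, and observe via a Chernoff bound plus a union bound over all $R^n$ assignments that the normalized value is preserved up to $1 \pm o(1)$. This yields, in randomized polynomial time, an $R$-degree bounded $t$-CSP over $[R]$ retaining the $\Omega_t(R)$ gap --- the randomization being precisely what turns the hypothesis into $\NP \subseteq \BPP$. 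The only subtlety is quantitative: one needs soundness $\Theta(1/R)$, not $R^{-1+\delta}$ or $\mathrm{polylog}(R)/R$, so the base CSP and the sparsification parameters must be tuned to the alphabet size.

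\textbf{Step 2: the gadget.} Given $\Psi$, I would build a $tR$-partite, $tR$-uniform hypergraph whose parts are indexed by pairs $(i,a) \in [t] \times [R]$ --- position inside a constraint, and value. A vertex of part $(i,a)$ encodes ``the variable sitting in position $i$ of a given constraint is set to $a$'', and the $\le R$ occurrences of each variable are wired together (say around a consistency cycle) so that a matching is driven toward assigning them one common value. Each satisfying local assignment of each constraint of $\Psi$ becomes a hyperedge, padded with the appropriate consistency vertices so that it meets all $tR$ parts exactly once. \emph{Completeness:} a global assignment satisfying $(1-o(1))N$ constraints gives a matching of size $(1-o(1))N$. \emph{Soundness:} from any matching $M$ one reads off, by a plurality/averaging argument over each variable's occurrences, a global assignment of $\Psi$ satisfying at least $|M|/\lambda_t$ constraints for a constant $\lambda_t$; since only $O_t(N/R)$ constraints are satisfiable in the NO case, this forces $|M| = O_t(N/R)$.

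\textbf{Where the difficulty lies.} The crux is the gadget of Step 2, and in particular pinning down the constant: one must engineer the consistency wiring and the padding so that the decoding loss $\lambda_t$ is as small as possible, and then verify that the resulting $C_t$ --- the product of the CSP's soundness constant, the decoding loss $\lambda_t$, and the arity $t$ --- is minimized at exactly $12$ over the choice of $t$, rather than at some larger $O(1)$. It is this sharp bookkeeping --- a tight gadget fed by a CSP with soundness $\Theta(1/R)$ rather than merely $\mathrm{polylog}(R)/R$ --- that upgrades the previous $\Omega(k/\log k)$ bound to a clean $\Omega(k)$ with an explicit constant; relaxing any component (a coarser gadget, a weaker CSP, or a sloppy degree reduction) would still give $\Omega(k)$, but with an unspecified and almost certainly worse constant.
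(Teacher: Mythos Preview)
Your overall architecture---hard bounded-degree CSP, then an encoding gadget, then optimize the arity---matches the paper, but two of the three ingredients are specified in a way that would not yield the constant $12$ (and the first would not even give $\Omega(k)$).

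\textbf{The base CSP.} MAX-$t$-LIN over $\mathbb{Z}_R$ has soundness $1/R + o(1)$ for \emph{every} $t$, since a random assignment satisfies a $1/R$-fraction of linear equations. Write the starting soundness as $CR^{-\gamma}$; the sparsification you describe keeps roughly $m \approx nR/t$ constraints, and the Chernoff/union-bound calculation over all $R^n$ assignments forces the post-sampling soundness to be at least $\approx t/((\gamma-1)R)$. With $\gamma = 1$ this bound is vacuous: you simply cannot beat the union bound starting from soundness $\Theta(1/R)$. The paper instead starts from Chan's $(1-\eps,\,O(R^{-(t-2)}))$-hard $t$-CSP, so $\gamma = t-2$, and the degree reduction yields soundness $\approx t/((t-3)R)$, which in particular requires $t \ge 4$. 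This is not a cosmetic choice; it is exactly what makes the $12$ appear: with a lossless gadget the $tR$-DM gap is $(t-3)R/t$, i.e.\ $K(t-3)/t^2$ for $K = tR$, maximized at $t = 6$ to give $K/12$.

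\textbf{The gadget.} Your Step 2 sketches a consistency-cycle construction with a decoding loss $\lambda_t$ recovered by plurality. The paper's contribution is a gadget with \emph{zero} decoding loss: for a variable of degree $d_v \le R$, the gadget is the hypergraph on $\mathbb{F}_R^2$ whose edges are the affine lines $e_{a,b} = \{(x, ax+b): x \in \mathbb{F}_R\}$, colored by slope $a \in [R]$. Two lines of different slopes always intersect and two of the same slope never do, so \emph{every} matching in the gadget is monochromatic. Gluing gadgets along satisfying tuples gives a $tR$-partite instance in which any matching decodes to a single consistent CSP assignment with $|M| = |\cC(\psi)|$ exactly---no averaging, no $\lambda_t$. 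This lossless property is only available because the degree has been pushed down to at most $R$ (one line per constraint, one slope per label), which is why the sparsification targets degree $R$ rather than some $O(\log R)$.

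In short: replace MAX-$t$-LIN by Chan's $t$-CSP with soundness $R^{-(t-2)}$, replace the generic cycle gadget by the affine-lines gadget, and the optimization over $t$ then lands cleanly on $t=6$ and the constant $12$; the padding step you mention at the end is exactly how the paper handles general $K$.
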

        \end{shaded}

        In particular, it explains the lack of substantial algorithmic progress beyond $O(k)$-approximation in that any algorithm is tied to an approximation ratio of that form.
        Apart from $k$-Set Packing and $k$-Matroid Intersection, $k$-Dimensional Matching is a reference problem whose hardness carries over to further generalizations of that problem. A non-exhaustive list of these generalizations includes: Independent Set in $k+1$-Claw Free Graph, $k$-Matchoid, and $k$-Matroid Parity (see: \cite{Thiery:2023:Improved, Lee:2013:Matroid-Matching-journal} for definitions and comparisons between these problems). All admit $O(k)$-approximation algorithms while the best $\NP$-hardness bound is equal to $\Omega(k/\log(k))$ from Hazan et al.'s result \cite{Hazan:2006:Complexity}, except for the independent set problem in $k+1$-claw free graph whose hardness was improved to $\frac{k+1}{4}$ \cite{Lee:2024:Hardness, Minzer:2024:Near-Optimal}. \Cref{thm:main-matching} thus reduces the gap between approximability and hardness from $O(\log(k))$ to a constant (for large $k$). A hierarchy of the different problems discussed so far can be found in  \Cref{fig:enter-label}.
        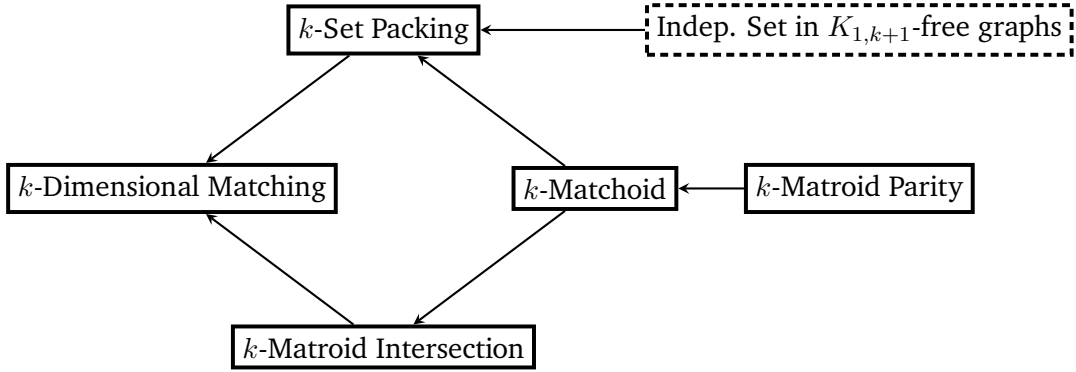
\begin{figure}[h]
            \centering
            \begin{tikzpicture}[x=1pt,y=1pt,yscale=-1,xscale=1]

\node[draw, fill=white, line width=1.5, align=center] (kDM) at (80, 140) {$k$-Dimensional Matching};
\node[draw, fill=white, line width=1.5, align=center] (kSetPacking) at (160, 80) {$k$-Set Packing};
\node[draw, fill=white, line width=1.5, align=center] (kMatroidIntersection) at (160, 200) {$k$-Matroid Intersection};
\node[draw, fill=white, line width=1.5, align=center] (kMatchoid) at (240, 140) {$k$-Matchoid};
\node[draw, fill=white, line width=1.5, densely dashed, align=center] (ISClawFree) at (340, 80) {Indep. Set in $K_{1, k+1}$-free graphs};
\node[draw, fill=white, line width=1.5, align=center] (kMatroidParity) at (340, 140) {$k$-Matroid Parity};

\draw[-stealth, line width=0.8pt] (kSetPacking) -- (kDM);  
\draw[-stealth, line width=0.8pt] (kMatroidIntersection) -- (kDM);  
\draw[-stealth, line width=0.8pt] (kMatchoid) -- (kSetPacking);  
\draw[-stealth, line width=0.8pt] (kMatchoid) -- (kMatroidIntersection);  
\draw[-stealth, line width=0.8pt] (kMatroidParity) -- (kMatchoid);  
\draw[-stealth, line width=0.8pt] (ISClawFree) -- (kSetPacking);  

\end{tikzpicture}
            \caption{This diagram represents a hierarchy of problems that capture $k$-Dimensional Matching. An arrow from $P$ to $Q$ means that $Q$ can be cast as $P$. For all problems with solid boxes \Cref{thm:main-matching} improves the hardness bound from $\Omega(k/\log(k))$ to $k/12$. On the other hand, finding an independent set in a $k+1$-claw free graph is hard to approximate beyond a factor of $\frac{k+1}{4}$ \cite{Lee:2024:Hardness, Minzer:2024:Near-Optimal}.}
            \label{fig:enter-label}
        \end{figure}

    \subsection{From CSPs to $k$-Dimensional Matching}
    The approximability of $k$-Dimensional Matching is related to that of $k$-CSPs, where we assign labels to variables to maximally satisfy constraints involving $k$ variables.
    Hazan et al. prove $\NP$-hardness of $k$-Dimensional Matching by providing a reduction from  3-LIN($q$) to $k$-DM.
    Beyond 3-LIN($q$) the approximability of CSPs was studied in parallel and led to strong results subject to various parameter restrictions \cite{Hastad:2000:bounded, Trevisan:2001:Non-Approximability,Laekhanukit:2014:Parameters, Lee:2024:Hardness}.
    The parameters that we will be interested in this work are the degree $d$ (number of constraints involving a variable) and the maximum number of labels $R$ (alphabet size). The best $\NP$-hardness results in terms of $R$ and $d$ are $O(R^{-(k-2)})$ for $k$-CSPs \cite{Chan:2016:Approximation-Resistance}, and $d/2$ for $2$-CSPs \cite{Minzer:2024:Near-Optimal}.
    These results combined with clever reductions led to stronger inapproximability bounds for connectivity problems in graphs \cite{Laekhanukit:2014:Parameters, Manurangsi:2019:Note-degree, Minzer:2024:Near-Optimal} and for finding independent sets in $d$-claw-free graphs \cite{Lee:2024:Hardness}.
    Motivated by these advances, we prove a new hardness result for $R$-degree bounded $k$-CSPs with alphabet $R$. Our hardness proof closely follows that of \cite{Lee:2024:Hardness}: We start from a $d$-regular $k$-CSP instance over alphabet size $R$ and randomly sample the constraints to obtain a $R$-degree bounded $k$-CSP hard to approximate within a factor $\Omega(R)$. We give a more detailed description of this procedure in the next section.
    The crux of our result then lies on a new approximation-preserving reduction from $R$-degree bounded $k$-CSP with alphabet size $R$ to $kR$-Dimensional Matching, which immediately implies that $p$-DM is hard to approximate beyond a factor $\Omega(p)$.

    \paragraph{Further References:} The hardness of $k$-CSPs over alphabet size $R$ is also well understood with a rich line of work \cite{
    Khot:2007:Optimal,
    samorodnitsky2006gowers,
    Guruswami:2008:Constraint,
    Austrin:2009:Approximation,
    Chan:2016:Approximation-Resistance,
    manurangsi2015near}.
    Hardness of approximation of factor $O(k/R^{k-2})$ for every $k, R$ and that of factor $O(k/R^{k-1})$ for every $k \geq R$ was first proved under the Unique Games Conjecture (UGC)~\cite{samorodnitsky2006gowers, Austrin:2009:Approximation} and later just under $\mathbf{P} \neq \mathbf{NP}$~\cite{Chan:2016:Approximation-Resistance}.
    This result is tight when $k > R$ \cite{DBLP:journals/toc/MakarychevM14}. When $k \leq R$, the hardness was subsequently improved assuming the UGC \cite{manurangsi2015near, lee2022characterization}, albeit without almost perfect completeness. For $2$-CSPs under the assumption that the instance is $d$-degree bounded, the optimal hardness of approximation of factor $d/2$ was first assuming the UGC \cite{Lee:2024:Hardness} and later without it \cite{Minzer:2024:Near-Optimal}.
    
    \subsection{Technical Overview}
    We start by informally discussing the reduction from some \emph{hard} $k$-CSP instance $\Pi$ to a hypergraph whose matchings should correspond to satisfying assignments of $\Pi$. Our high-level strategy follows that of the previous best $\Omega(k / \log (k))$-hardness of Hazan, Safra, and Schwartz~\cite{Hazan:2006:Complexity}. Let $\Pi = (G = (V, E), R, \cC)$ be a $d$-regular $k$-CSP with alphabet size $R$ (see \Cref{def:CSP}) for some $d$ that might be arbitrarily larger than $R$.
    We construct a \emph{variable gadget} for each $v \in V$, i.e. an hypergraph $H_v = (X_v, E_v)$ with the following properties:
    \begin{enumerate}
        \item $E_v$ is partitioned into $R$ matchings $E_1, \dots, E_R$ with $|E_1| = \dots |E_R| = d$, where $E_i = \{ e_{i, j} \}_{j \in [d]}$.
        \item Any matching must be almost contained in some $E_i$; formally, for every matching $M \subseteq E$, there exists $i$ such that $|M \setminus E_i| \leq \delta \card{E_i}$ for some small $\delta \geq 0$.
    \end{enumerate}
    The intuition is that any matching in $H_v$ is almost contained in some set $E_i$ and so should correspond to the situation where $v$ is assigned label $i$ in $\Pi$. The $\nth{j}$ edge in $E_i$ corresponds to the $\nth{j}$ constraint involving $v$ for $j \in [d]$ where each $v$ arbitrarily orders its constraints.
    The final hypergraph is constructed as follows: the vertex set is the union of the vertex sets over all gadgets. Let $C \in \cC$ be a clause involving $k$ variables $v_1, \ldots, v_k \in V$, assume that $C$ is the $\nth{b_i}$ constraint in $v_i$'s ordering (so $b_i \in [d]$ for each $i \in [k]$). For every satisfying assignment $a = (a_1, \ldots, a_k) \in [R]^k$ of $C$, we create an edge $e(C, a) = e^{v_1}_{a_1, b_1} \cup
    \ldots \cup
    e^{v_k}_{a_k, b_k}$, where the superscript indicates the gadget.
    Crucially, if there is a good assignment $\alpha : V \to [R]$ for $\Pi$ that satisfies $\cC' \subseteq \cC$, then one can check that
    there is a matching $M$ of size $\card{\cC'} = \card{M}$. Indeed, the set
    $M = \{ e(C, \alpha(C)) : C \in \cC' \}$\footnote{Let $\alpha(C) := (\alpha(v))_{v \in e}$ where $e$ is the hyperedge corresponding to $C$.}
    forms a matching with $|\cC'| = |M|$, because for each gadget for vertex $v$, we only use edges in $E_{\alpha(v)}$, which implies that there is no conflict inside $v$'s gadget.
    The other direction, thanks to Property 2 above, also approximately holds so that a large matching implies a good CSP assignment, completing the reduction.
    The final hardness ratio of the reduction is $c/(s + k\delta)$, where $c, s, \delta \in [0, 1]$ are the following parameters.
    \begin{enumerate}[i.]
        \item $\delta$ in Property 2, which denotes the fraction of hyperedges one can get by {\em cheating} compared to the intended matching. 
        \item The starting $k$-CSP instance $\Pi$ is $(c, s)$-hard, meaning that it is hard to distinguish whether the maximum fraction of satisfied constraints is at least $c$ or at most $s$. 
    \end{enumerate}
    \paragraph{Hazan, Safra and Schwartz:}
        Hazan et al.
        \cite{Hazan:2006:Complexity} used the hardness of 3-LIN($R$)~\cite{Hastad:2001:Some-optimal} that has $c = 1 - \eps$, $s = (1+\eps)/R$.
        They constructed a gadget with $\delta = 1/R$ which yields the gap of $R/(4+\eps)$ for Set Packing, but the uniformity of their gadget hypergraph is $\Theta(R \log R)$ so that their hardness for $K$-Set Packing is $\Omega(K / \log K)$. Their result extends to $K$-Dimensional Matching.
        While the gap between $c$ and $s$ can be easily made larger by going from $3$-CSPs to $4$-CSPs, their tradeoff between $\delta$ and the uniformity is optimal in some sense;
        using~\cite{Radhakrishnan:2000:Bounds-for-dispersers}, they proved whenever $d \gg R$ the edge uniformity must be at least $\Omega(R \log(R))$ for $\delta = 1/R$ to hold. 

    \paragraph{Bounded degree yields better gadget.}
        Therefore, one can possibly design a better gadget by getting a hold of $d$ in terms of $R$, even with $\delta = 0$.
        It turns out that the following simple construction gives a \emph{variable gadget} with these guarantees.
        Let $R = d$ be a prime number and consider the hypergraph $(X, E)$ where $X = \mathbb{F}_R^2$ and $E = \{ e_{a, b} : a, b \in \mathbb{F}_R \}$ with $e_{a, b} = \{ (x, ax + b) : x \in \mathbb{F} \}$. Letting $E_a = \{ e_{a, b} \}_{b \in \mathbb{F}_R}$. It is easy to see that $(X, E)$ satisfies both Property 1 and 2 for the gadget with the uniformity $R$ (instead of $\Omega(R \log R)$) and $\delta = 0$! Using this gadget to construct our hypergraph we would obtain an approximation preserving reduction that maps $\Pi$ to an hypergraph $G_\Pi$ such that the maximum matching yields an optimal assignment for $\Pi$. Yet, this gadget assumes that each $d = R$ so each variable appears in at most $R$ constraints ($R$ is also the alphabet size).

        \paragraph{Obtaining bounded degree hardness.}
            Our final hardness is then only determined by Factor ii. It is the hardness for an $R$-degree bounded $k$-CSP with alphabet size $R$ and $k = O(1)$. Our final result is a $R(k-3)/(k(1 + \eps))$-hardness for $R$-degree bounded $k$-CSP with alphabet size $R$ (\Cref{thm:bounded-degree-hardness-simple}).
            The proof closely follows techniques of Lee and Manurangsi~\cite{Lee:2024:Hardness} that prove $d/(2+\eps)$-hardness for $d$-degree bounded $2$-CSP (without restriction on the alphabet size). The strategy is simple.
            We start from a $(1 - \delta, O(R^{-(k-2)}))$-hard $d$-regular (where $d$ can be arbitrarily larger than $R$) $k$-$\CSP$ instance $\Pi$ with alphabet size $R$ \cite{Chan:2016:Approximation-Resistance}.  
            We obtain a hard-to-approximate $R$-regular $k$-$\CSP$ instance $\Pi'$ with alphabet size $R$ by sampling each constraint of $\Pi$ with probability roughly $\simeq R/d$. We ensure that the degree of every vertex is at most $R$ using few deletions which we show have negligible impact.
            The main technical step is to bound the soundness of $\Pi'$. For simplicity, let $s = 1/R^{k-2}$ be the soundness of $\Pi$.
            Let $n$ and $m_0 = nd/k$ be the number of vertices and hyperedges in $\Pi$, and let $m \simeq nR/k$ be the (expected) number of edges in $\Pi'$. The expected number of satisfied constraints after sampling is: $ \mu = s m$. We prove that the soundness of $\Pi'$ is at most $s' = k(1+\e)/((k-3)R)$ for some small $\e > 0$ by showing that, for any assignment, the probability that it satisfies more than $s'm$ constraints is at most: $\lb \frac{s}{s'} \rb^{\mu \cdot \frac{s'}{s}} \simeq R^{-(k-3)s'm} \simeq R^{-(1+\e)n}$.
            We conclude that there is no assignment that satisfies more than $s'm$ constraints using a union bound over all $R^n$ assignments.

    \newpage
    \section{Preliminaries}
    \label{sec:def}
    In this section we introduce basic definitions about $k$-CSPs and $k$-Dimensional Matching.
    \begin{definition}[$k$-CSP]
       \label{def:CSP}
        Given $k \in \NN$, a $k$-CSP instance $\Pi = (G = (V, E), R, \cC)$ consists of:
        \begin{itemize}
            \item A constraint hypergraph $G = (V, E)$ with hyperedges of size $k$.
            \item An alphabet $[R]$.
            \item For each $ e = (u_1, \ldots, u_k) \in E$, a constraint $\cC_e \subseteq R^k$. We denote by $\cC$ the set of constraints, and $ \card{\cC} = \card{E}$.
        \end{itemize}
    \end{definition}
    The graph terminology applies to describe $k$-CSP instances. We say that $\Pi = (G = (V, E), R, \cC)$ is \emph{$d$-degree bounded} (respectively \emph{$d$-regular}) if every vertex has degree at most at $d$ (respectively \emph{exactly} $d$). We say that $\Pi$ is $k$-partite if $G$ is a $k$-partite graph (i.e. $V = V_1 \cup \ldots V_k$ with $V_i \cap V_j = \emptyset$ and each $e \in E$ is incident to exactly $1$ vertex from each $V_i$).
    An \emph{assignment} is a tuple $(\psi_{v})_{v \in V}$ such that $\psi_v \in [R]$. In other words, it is an assignment of a label to each vertex $v \in V$, denoted by $\psi_v$.
    We are interested in the number of constraints satisfied $\psi$, and we denote by $\cC(\psi)$ the set of constraints satisfied by the assignment $\psi$.
    More precisely, we define $\cC(\psi) \triangleq \lc e \in E \colon \psi(e) \in \cC_e \rc$.
    Let $\val_{\Pi}(\psi) \triangleq \card{\cC(\psi)} / |\cC|$ be the fraction of the constraints satisfied by $\psi$. The maximum fraction of constraints satisfied by any assignment is denoted by $\val(\Pi)$ and we denote by $\psi^*$ an assignment that realizes $\val_{\Pi}(\psi^\ast) = \val(\Pi)$.
    Given a $k$-CSP instance $\Pi$, we say that $\Pi$ is $(c, s)$-hard if it is $\NP$-hard to distinguish whether $\val(\Pi) \geq c$ or  $\val(\Pi) \leq s$.
    \begin{remark}
        \label{assumption}
        Without loss of generality, all the $k$-CSPs that we mention in this work are $k$-partite. 
    \end{remark}

    \begin{definition}[$k$-Set Packing/$k$-Dimensional Matching]
        \label{def:k-SP}
         A $k$-Set Packing instance $\Pi = (G = (V, E))$ consists of: an hypergraph $G = (V, E)$ with hyperedges of size at most $k$. We say that $\Pi$ is a $k$-Dimensional Matching instance if $G$ is $k$-partite.
    \end{definition}
    Note that in the special case of $k$-Dimensional Matching every hyperedge has size exactly $k$.    We will be interested in the matching of maximum size in $G$. A matching $M \subseteq E$ is a subset of edges where any vertex belongs to at most one edge in $M$.

    \section{Approximation Preserving Reduction from $k$-CSP to $kR$-Set Packing}
    \label{sec:reduction}
    This section details our main gadget. It is an approximation-preserving reduction from $R$-degree bounded $k$-CSP with alphabet size $R$ to $kR$-Set Packing.
    \begin{theorem}
        \label{thm:reduction-to-SP}
        Let $R \in \NN$ be a prime number.
        There is an approximation-preserving reduction that maps any $R$-degree bounded $k$-$\CSP$ instance $\Pi = (G = (V, E), R, \cC)$ with alphabet size $R$ with optimal assignment $\psi^\ast$ to a $k R$-Set Packing instance with maximum matching $M^\ast$ such that $\card{\cC(\psi^\ast)} =  \card{M^\ast}$.
        If $\Pi$ is $k$-partite, the constructed instance is a $kR$-Dimensional Matching instance.
        The running time of the reduction is at most $\poly(|V|, |E|, R^k)$. 
    \end{theorem}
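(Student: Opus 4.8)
The plan is to carry out the construction from the technical overview using the affine‑line gadget, which realizes Property~2 with $\delta = 0$, and then check that the correspondence between matchings and assignments is a bijection on the relevant level sets. Since $R$ is prime, $\mathbb{F}_R$ is a field; for each $v \in V$ I take a disjoint copy of the hypergraph on vertex set $X_v = \mathbb{F}_R^2$ with edges $e^v_{a,b} = \{(x, ax+b) : x \in \mathbb{F}_R\}$ and slope classes $E^v_a = \{e^v_{a,b} : b \in \mathbb{F}_R\}$, so each $E^v_a$ is a perfect matching of $X_v$. Because $\Pi$ is $R$-degree bounded, the constraints incident to $v$ can be numbered injectively by elements of $\mathbb{F}_R$; let $b_{v,C}$ denote the index of $C$ at $v$. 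The $kR$-Set Packing instance $G_\Pi$ has vertex set $\bigsqcup_{v\in V} X_v$ and, for every clause $C$ on variables $v_1,\dots,v_k$ and every satisfying tuple $a=(a_1,\dots,a_k)\in\cC_C$, one edge $e(C,a)=\bigcup_{i=1}^{k} e^{v_i}_{a_i,b_{v_i,C}}$; this is a union of $k$ lines living in $k$ disjoint gadgets, so it has exactly $kR$ vertices. When $\Pi$ is $k$-partite with parts $V_1,\dots,V_k$, the sets $P_{\ell,x}=\bigcup_{v\in V_\ell}(\{x\}\times\mathbb{F}_R)$ for $\ell\in[k]$, $x\in\mathbb{F}_R$ partition the vertices and each edge meets every one of them exactly once, so $G_\Pi$ is $kR$-partite; writing down $|V|R^2$ vertices and at most $|\cC|R^k$ edges of size $kR$ runs in $\poly(|V|,|E|,R^k)$ time.

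The one geometric fact to prove about the gadget is that two lines $e^v_{a,b}$ and $e^v_{a',b'}$ are disjoint exactly when $a=a'$ and $b\ne b'$: if $a\ne a'$ the equation $(a-a')x=b'-b$ has the unique solution $x=(b'-b)(a-a')^{-1}$ in the field $\mathbb{F}_R$, so the lines meet; in particular any matching inside $H_v$ lies in a single slope class, i.e.\ Property~2 holds with $\delta=0$. For completeness I would take $M=\{e(C,\psi^\ast(C)):C\in\cC(\psi^\ast)\}$, which is well defined since $\psi^\ast(C)$ is a satisfying tuple; edges of distinct satisfied clauses are disjoint because on any shared variable $v$ their lines have the same slope $\psi^\ast(v)$ but distinct intercepts $b_{v,C}\ne b_{v,C'}$, and they meet no common gadget otherwise. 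Hence $\card{M^\ast}\ge\card{M}=\card{\cC(\psi^\ast)}$.

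For soundness, from a maximum matching $M^\ast$ I would define $\psi(v)$ to be the common slope of all lines that $M^\ast$-edges contribute to $H_v$ (well defined by the geometric fact, arbitrary if no $M^\ast$-edge touches $H_v$). Each $e(C,a)\in M^\ast$ then forces $a_i=\psi(v_i)$ for every variable $v_i$ of $C$, so $a=\psi(C)\in\cC_C$ and $C\in\cC(\psi)$; and no two edges $e(C,a)\ne e(C,a')$ of the \emph{same} clause can both lie in $M^\ast$, since on a coordinate $i$ with $a_i\ne a'_i$ their two lines share the intercept $b_{v_i,C}$ and hence both contain the point $(0,b_{v_i,C})$. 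Thus $e(C,\cdot)\mapsto C$ is a bijection from $M^\ast$ onto a subset of $\cC(\psi)$, giving $\card{M^\ast}\le\card{\cC(\psi)}\le\card{\cC(\psi^\ast)}$ and therefore equality; since an optimal matching and an optimal assignment are obtained from each other by the two displayed maps in polynomial time, the reduction is approximation preserving. The main obstacle is not any single hard step but getting the soundness bookkeeping exactly right: matching-disjointness must be pushed through the shared gadgets to pin down one slope per variable, \emph{and} one must separately exclude two edges of a common clause via the point $(0,b_{v,C})$ so that the clause-to-edge correspondence is a bijection rather than merely onto. Primality of $R$ enters only through the field structure of $\mathbb{F}_R$ used in the geometric fact.
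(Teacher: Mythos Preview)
Your proposal is correct and follows essentially the same approach as the paper: the affine-line gadget over $\mathbb{F}_R$, the key claim that two lines intersect iff their slopes differ, the union-of-lines construction of $e(C,a)$, the $kR$-partition by $(\text{part of }v,\ x\text{-coordinate})$, and the two-sided inequality between $|M^\ast|$ and $|\cC(\psi^\ast)|$. Your soundness argument is in fact slightly more explicit than the paper's, since you separately rule out two edges of the same clause via the shared point $(0,b_{v_i,C})$, whereas the paper leaves this implicit in its ``one colour per gadget'' observation.
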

    \begin{remark}
        In fact, our reduction does not only preserve size. A maximum matching can be used to find an optimal assignment of $\Pi$ and vice-versa.
    \end{remark}
    \begin{proof}[Proof of \Cref{thm:reduction-to-SP}]
       We start with the construction of a gadget that we will later use to construct our reduction.
       Fix a variable $v \in V$ from our $\CSP$ with degree $d_v \in [R]$. We construct a gadget graph $H_v = (X, E_v)$, where $X = [R] \times [R]$. Similar to \cite{Hazan:2006:Complexity}, the idea is to create an edge $e(v, C, a_v) \in E_v$ for each constraint $C \in \cC$ where $v \in C$ and any assignment $a_v \in [R]$ of $v$.
       Since the $\CSP$ is $R$-degree bounded and the alphabet size is equal to $R$, we construct exactly $d_v R \leq R^2$ edges.
       To construct them, we define the following functions: $f_{a, b}(x) \triangleq a x + b\mod R$, with $x, a, b \in [R]$. We interchangeably let $R$ and $0\!\mod R$ as the same value. 
       For a fixed $a, b \in [R]$, we define an edge as $e_{a, b} = \bigcup_{x \in R} \{(x, f_{a, b}(x)) \}$, which we think of as the plot of an affine function in the $R\times R$ square with coefficients $a, b$.
       Fix an arbitrary one-to-one correspondence $b'$ from the constraints containing $v$ to $[d_v]$.
       Then $e(v, C, a_v) := e_{a_v, b'(C)}$.
       Let $E_a \triangleq \bigcup_{b \in [d_v]} \{ e_{a, b} \}$.
       We will treat $E_a$ as the set of edges that assign $a_v = a$ and alternatively think of $e_{a, b}$ as having color $a$.
       Therefore, each $e_{a, b}$ corresponds to an assignment $a_v = a$ and the $\nth{b}$-constraint where $v$ occurs. The following claim proves a key property of our gadget:
       \begin{claim}
            \label{claim:intersection}
            Distinct edges of the same color do not intersect. Edges of different colors intersect.
       \end{claim}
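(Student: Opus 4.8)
The plan is to unwind the definitions and reduce both assertions to elementary facts about affine functions over the prime field $\mathbb{F}_R$. Recall that an edge $e_{a,b}$ is the graph $\{(x, ax+b \bmod R) : x \in [R]\}$ of the affine map $f_{a,b}$, and that two edges ``intersect'' exactly when they share a vertex $(x,y) \in [R]\times[R]$, i.e.\ when there is some common input $x$ on which the two affine maps agree. So the claim is purely about how many times two affine functions over $[R]$ can coincide.

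For the first sentence (same color, so same slope $a$ but distinct intercepts $b \neq b'$), I would argue that $f_{a,b}$ and $f_{a,b'}$ never agree: for every $x$, $f_{a,b}(x) - f_{a,b'}(x) = b - b' \not\equiv 0 \pmod R$ since $b \neq b'$ and both lie in $[R]$. Hence $e_{a,b}$ and $e_{a,b'}$ have no common vertex and are disjoint. (Note this is where we use that the $x$-coordinates of the two edges are the same set $[R]$, so a shared vertex would force a shared input.)

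For the second sentence (different colors, slopes $a \neq a'$), I would solve $f_{a,b}(x) = f_{a',b'}(x)$, i.e.\ $(a - a')x \equiv b' - b \pmod R$. Here is the only place primality of $R$ is needed: since $a - a' \not\equiv 0 \pmod R$, it is invertible in $\mathbb{F}_R$, so this equation has the (unique) solution $x_0 = (a-a')^{-1}(b'-b) \bmod R \in [R]$. Then the vertex $(x_0, f_{a,b}(x_0))$ lies on both $e_{a,b}$ and $e_{a',b'}$, so the edges intersect.

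There is no real obstacle here; the proof is a two-line computation in $\mathbb{F}_R$. The only thing to be careful about is the bookkeeping convention that identifies $0$ and $R$ (so that the coordinate set is exactly $[R]$ rather than $\{0,\dots,R-1\}$), which does not affect the arithmetic mod $R$; and to state explicitly that two distinct edges of different colors meet in \emph{exactly one} vertex, a fact that will be convenient later when analyzing matchings in $H_v$, even though only ``at least one'' is asserted in the claim.
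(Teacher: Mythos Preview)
Your proposal is correct and matches the paper's proof essentially line for line: both cases reduce to solving $ax+b\equiv a'x+b'\pmod R$, yielding a contradiction when $a=a'$, $b\neq b'$, and the explicit solution $x=(a-a')^{-1}(b'-b)$ when $a\neq a'$ using that $R$ is prime. Your additional remark that the intersection point is unique is a harmless bonus not present in the paper.
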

        \Cref{claim:intersection} implies that any matching in $H_v$ is \emph{consistent}: all edges of a given matching are colored with a unique color corresponding to an assignment $a_v \in [R]$ of $v$. The size of any matching in $H_v$ is bounded by $d_v$.
       \begin{proof}[Proof of \Cref{claim:intersection}]
        Let $e_{a, b}, e_{a, b'}$ be distinct edges of the same color. They intersect if and only if there is an $x \in [R]$ such that $a x + b \equiv a x + b' \mod R$. This would imply that $b \equiv b' \mod R$, a contradiction.
        Similarly, $e_{a, b}, e_{a', b'}$ be two edges of different colors, i.e. $a \neq a'$. We verify the existence of an $x$ such that $a x + b \equiv a' x + b' \mod R$, which we can equivalently write as $(a -a')  x \equiv b'- b  \mod R$. Given that $R$ is a prime number, we can let $x \equiv (b' - b)(a - a')^{-1}$ where $(a - a')^{-1}$ denotes the inverse of $a - a'$ in the field $\mathbb{Z}/R\mathbb{Z}$. We emphasize that the existence of an inverse follows from the fact that $\mathbb{Z}/R\mathbb{Z}$ is a field since $R$ is a prime number.
       \end{proof}

       \subparagraph{Final Construction:} We are now ready to construct our final $k R$-Set Packing instance $G_{\Pi} = (V_{\Pi}, E_{\Pi})$. For each variable $v$ in our $k$-$\CSP$, we construct a \emph{variable} graph $H_v = (X_v, E_v)$ as previously. The ground set $V_{\Pi}$ is the union of each gadget $V_{\Pi} = \bigcup_{v \in V} X_v$.
       Now, each edge in $e \in E_{\Pi}$ will correspond to a constraint $C$ and a satisfying assignment of that constraint. More precisely, for a constraint $C$ associated to $v_1, \ldots, v_k$ we create an edge $e(C, a) \triangleq  e(v_1, C, a_{v_1}) \cup e(v_2, C, a_{v_2})\cup \ldots \cup e(v_k, C, a_{v_k})$, where $e(v_i, C, a_{v_i}) \in E_{v_i}$  if and only if the assignment $a_{v_i}$ of $v_i$'s satisfies the constraint $C$.
       Note that the running time of the reduction and the number of sets our instance is $
       \poly(|V|, |E|, R) \cdot \sum_{C \in \mathcal{C}} [\mbox{number of satisfying assignments for }C]$, which is at most $\poly(|V|, |E|, R^k)$. 
       
        \subparagraph{$k$-Partiteness implies $kR$-DM.}
            Suppose that $\Pi$ is $k$-partite, so that $V = V_1 \cup \ldots \cup V_k$ and each $e \in E$ contains exactly one vertex from each $V_i$. We can write each vertex $u \in V_\Pi$ as $u = (v, (x, j))$ where $v \in V_i$ is a variable in one of the partition of $\Pi$, and $(x, j) \in [R] \times [R]$ is a pair of indices of the variable gadget $X_v$, where $X_v$ can be partitioned according to the column indexed by $x \in [R]$.
            This defines the following partition of $V_{\Pi}$ into $(V^\Pi_{i, x})_{i \in [k], x \in [R]}$ where
            $V^\Pi_{i, x} \triangleq \cup_{v \in V_i} (X_v \cap (\{ x \} \times [R]))$.
            For any vertex $v \in V_i$ of $G$ and an edge $e_{a, b} \in E_v$, the definition of $e_{a, b}$ ensures that $|e_{a, b} \cap V^\Pi_{i, x}| = 1$ for every $x \in [R]$. Since every edge $e(C, a) \in E_{\Pi}$ is the union of $k$ such $e_{a, b}$'s coming from each of $V_1, \dots, V_k$, it has exactly one vertex from every $V^\Pi_{i, x}$.

       \subparagraph{Equivalence.}
       We finish the proof of \Cref{thm:reduction-to-SP} by showing that the size of a maximum matching in $G_{\Pi}$ is equal to the maximum number of simultaneously satisfied constraints in $\Pi$. Let $\psi^\ast$ be an optimal assignment and $M^\ast$ be an optimal matching on $G_{\Pi}$.
       By \Cref{claim:intersection}, there is a one-to-one correspondence between edges of $M^\ast$ and satisfied constraints by $\psi^\ast$.
       Indeed, the matching $M^\ast$ corresponds to a unique assignment $a_v \in [R]$ to each $v \in V$ and thus $\card{M^\ast} \leq \card{\cC(\psi^\ast)}$. On the other hand, the assignment $\psi^\ast$ can be turned into a matching $M$ where an edge belongs to $M$ if the corresponding constraint is satisfied. \Cref{claim:intersection} asserts that this is indeed a matching. Thus, we have $\card{\cC(\psi^\ast)} = \card{M} \leq \card{M^\ast}$. This finishes the proof.
    \end{proof}

    \section{From $k$-$\CSP$ to bounded degree $k$-$\CSP$}
    \label{sec:degree-reduction}
    In this section, we show the hardness of $R$-degree bounded $k$-CSP with alphabet size $R$, proving the following theorem.
    \begin{theorem}
    \label{thm:bounded-degree-hardness-simple}
    Let $k \geq 4$ be an integer. Unless $\mathbf{NP} \subseteq \mathbf{BPP}$, for any $\eps > 0$ and sufficiently large prime
    $R \geq R_0(\eps, k)$, no polynomial-time algorithm can distinguish that a given $R$-degree bounded $k$-CSP instance $\Pi$ with alphabet size $R$ has $\val(\Pi) \geq 1-\eps$ or
    $\val(\Pi) \leq \frac{k(1 + \e)}{(k-3) R}$.
    \label{thm:4-CSP-bounded-hardness}
    \end{theorem}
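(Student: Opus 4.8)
The plan is to follow the randomized sparsification strategy that Lee and Manurangsi~\cite{Lee:2024:Hardness} used for $2$-CSPs, adapted to $k$-CSPs. Start from a $d$-regular $k$-partite $k$-$\CSP$ instance $\Pi=(G=(V,E),R,\cC)$ over alphabet size $R$ (the given large prime) that is $(1-\delta,\,s)$-hard with $s=O_k(R^{-(k-2)})$ for an arbitrarily small constant $\delta>0$, and with $d$ as large as we like compared to $R$; this is the hypergraph approximation-resistance hardness of Chan~\cite{Chan:2016:Approximation-Resistance}. Write $n=\card{V}$ and $m_0=\card{E}=nd/k$. Form a random instance $\Pi'$ on the same vertex set by keeping each constraint of $\Pi$ independently with probability $p:=R(1-\gamma)/d$, where $\gamma=\gamma(\eps,k)>0$ is a small constant to be fixed, and then, at every vertex whose degree exceeds $R$, deleting an arbitrary set of incident constraints that brings its degree down to $R$. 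By construction $\Pi'$ is $R$-degree bounded, $k$-partite (by \Cref{assumption}), over alphabet $R$; the expected number of constraints it contains before the degree cleanup is $m:=p\,m_0=(1-\gamma)nR/k=\Omega(n)$. It remains to verify completeness, soundness, and that the cleanup is cheap.

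\textbf{Completeness.} If $\val(\Pi)\ge 1-\delta$, fix $\psi$ with $\card{\cC(\psi)}\ge(1-\delta)m_0$. The constraints of $\Pi$ satisfied by $\psi$ that survive the sampling form a sum of independent Bernoullis of mean $\ge(1-\delta)m$, while the total number of surviving constraints has mean $m$; since $m=\Omega(n)$, a Chernoff bound puts both within a $(1\pm\gamma)$ factor of their means except with probability $e^{-\Omega(n)}$. Combined with the bound on the number of constraints removed by the degree cleanup (below), this gives $\val(\Pi')\ge 1-\eps$ with probability $1-o(1)$, provided $\delta$ and $\gamma$ are chosen small enough relative to $\eps$.

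\textbf{Soundness, the main step.} Suppose $\val(\Pi)\le s$ and fix an arbitrary assignment $\psi:V\to[R]$. Since $\card{\cC(\psi)}\le s\,m_0$, the number $X_\psi$ of constraints satisfied by $\psi$ that survive the sampling is stochastically dominated by a binomial of mean $\mu_\psi\le s\,p\,m_0=s\,m$. Put $s':=\frac{k(1+\eps)}{(k-3)R}$ and $a:=s'm$; since $k\ge 4$ and $R$ is large, $s'/s=\Theta_{k,\eps}(R^{k-3})>1$, so $a>\mu_\psi$ and the multiplicative Chernoff bound gives
\[
\Pr\!\left[X_\psi\ge s'm\right]\ \le\ \left(\frac{e\,\mu_\psi}{a}\right)^{\!a}\ \le\ \left(\frac{e\,s}{s'}\right)^{\!s'm}.
\]
Here $e\,s/s'=\Theta_{k,\eps}(R^{-(k-3)})$, so for $R\ge R_0(\eps,k)$ it is at most $R^{-(k-3)(1-\gamma)}$, whereas $s'm=\frac{k(1+\eps)}{(k-3)R}\cdot\frac{(1-\gamma)nR}{k}=\frac{(1-\gamma)(1+\eps)}{k-3}\,n$; hence the right-hand side is at most $R^{-(1-\gamma)^2(1+\eps)n}$. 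Choosing $\gamma$ so small that $(1-\gamma)^2(1+\eps)\ge 1+\eps/2$ and union-bounding over all $R^n$ assignments,
\[
\Pr\!\left[\exists\,\psi:\ X_\psi\ge s'm\right]\ \le\ R^{n}\cdot R^{-(1+\eps/2)n}\ =\ R^{-\eps n/2}\ =\ o(1).
\]
Since the degree cleanup only removes constraints, no assignment satisfies more than $s'm$ constraints of $\Pi'$ either; as $\Pi'$ keeps at least $(1-2\gamma)m$ constraints, $\val(\Pi')\le s'm/((1-2\gamma)m)\le\frac{k(1+\eps')}{(k-3)R}$ for a slightly larger $\eps'$, which is the claimed soundness after renaming $\eps'$ to $\eps$.

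\textbf{Regularization, wrap-up, and the hard part.} Before cleanup, $\deg_{\Pi'}(v)\sim\mathrm{Bin}(d,(1-\gamma)R/d)$ has mean $(1-\gamma)R$, so a standard tail estimate gives $\mathbb{E}\big[(\deg_{\Pi'}(v)-R)^{+}\big]\le O(\sqrt R)\,e^{-\Omega(\gamma^2 R)}$; summing over $v$, the expected number of constraints removed by the cleanup is at most $O_k\big(e^{-\Omega(\gamma^2 R)}\big)\cdot m$, which is $\le(\eps/10)\,m$ once $R\ge R_0(\eps,k)$, hence $\le(\eps/5)m$ with probability $\ge 1/2$ by Markov. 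Intersecting this event with the completeness and soundness events (each of probability $1-o(1)$) leaves probability $\ge 1/3$ that the whole reduction succeeds, which is enough for a randomized ($\mathbf{BPP}$) reduction and can be amplified by repetition; taking the starting alphabet size to be the given large prime $R$ completes the proof. The main obstacle is precisely the soundness calculation: the Chernoff tail $(e\,s/s')^{s'm}$ must beat the $R^n$ union bound, i.e.\ one needs $(k-3)\,s'm\gtrsim(1+\Omega(1))n$, and this is exactly what pushes the achievable soundness up from the ``ideal'' $\Theta(1/R)$ to $\Theta\!\big(\frac{k}{(k-3)R}\big)$ and forces $k\ge 4$. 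A secondary technical nuisance is that $R$ is a constant while $n\to\infty$, so degree regularization cannot afford a per-vertex failure probability $\ll 1/n$; instead one controls the \emph{expected} number of deleted constraints and absorbs the resulting (arbitrarily small, $R$-dependent) constant loss into $\eps$.
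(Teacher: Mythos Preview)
Your proposal is correct and follows essentially the same approach as the paper: start from Chan's $(1-\delta,\,O(R^{-(k-2)}))$-hard $d$-regular $k$-CSP, sample each constraint with probability $\approx R/d$, clean up high-degree vertices, and prove soundness by the multiplicative Chernoff bound together with a union bound over the $R^n$ assignments---exactly the calculation that forces the soundness up to $\tfrac{k(1+\eps)}{(k-3)R}$ and requires $k\ge 4$. The only minor deviations are in auxiliary tools: the paper uses Cantelli's inequality rather than Chernoff for the completeness concentration, and bounds the expected per-vertex excess degree by the cruder estimate $\mathbb{E}[(\deg-R)^+]\le((1-\lambda)/\lambda)^2$ (quoting a lemma from \cite{Lee:2024:Hardness}) instead of your sharper Chernoff-tail bound $O(\sqrt{R})e^{-\Omega(\gamma^2 R)}$; both choices lead to the same conclusion after absorbing constants into $\eps$.
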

    Since it implies that $R$-degree bounded $6$-CSPs with alphabet size $R$ are hard to approximate within a factor of $R/(2(1+\e))$ for any $\eps > 0$, the approximation-preserving reduction to $6R$-Dimensional Matching (\Cref{sec:reduction}) implies that $K$-Dimensional Matching is hard to approximate within a factor $K/(12 + \e)$ for any $\e > 0$ and large number $K \geq K_\e$ thereby proving Theorem~\ref{thm:main-matching}. This, in fact, proves \Cref{thm:main-matching} only when $K = 6R$ and $R$ is a prime. For clarity and completess, we show in \Cref{sec:simple-hardness-computation} that the result holds for all sufficiently large $K$.
    For the rest of the section, we prove Theorem~\ref{thm:4-CSP-bounded-hardness}. Our starting point is the following result of Chan~\cite{Chan:2016:Approximation-Resistance}.

    \begin{theorem}[\cite{Chan:2016:Approximation-Resistance}]
        \label{thm:4-CSP-Hardness}
        Let $k \geq 3$. For any $\e > 0$ and prime power $R$, there is a $(1-\e, O(R^{-(k-2)}))$-hard $d$-regular $k$-CSP instance $\Pi$ over alphabet size $R$.
    \end{theorem}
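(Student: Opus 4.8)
\textbf{Proof strategy for \Cref{thm:4-CSP-Hardness}.}
We use this theorem of \cite{Chan:2016:Approximation-Resistance} as a black box, but for completeness we outline how one would prove it; it follows the \emph{Label Cover} $+$ \emph{Long Code} paradigm, instantiated with a predicate that is the support of a balanced pairwise-independent distribution which is, in addition, a \emph{subgroup}. I would start from a \textbf{Label Cover} instance $\mathcal{L} = (A \cup B,\; E_{\mathcal{L}},\; [L],\; [M],\; \{\pi_e\}_{e \in E_{\mathcal{L}}})$, with the mild smoothness property used in the soundness analysis below, that is $\NP$-hard with perfect completeness and soundness at most $\gamma$ for an arbitrarily small constant $\gamma = \gamma(\e, k, R) > 0$; such instances follow from the PCP theorem and parallel repetition, and may be taken bi-regular. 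Over the alphabet $[R] \cong \mathbb{F}_R$ I would fix the predicate $P \subseteq \mathbb{F}_R^k$: in the only regime needed downstream, $k \le R$ (here $k = 6$ and $R$ large), take distinct $z_1, \dots, z_k \in \mathbb{F}_R$ and
\[
  P \;=\; \bigl\{\, (a z_1 + b,\; \dots,\; a z_k + b) \;:\; a,\,b \in \mathbb{F}_R \,\bigr\} ,
\]
which is a subgroup of $\mathbb{F}_R^k$ of size $R^2$ whose uniform distribution $\mu$ is pairwise independent, because any two of the coordinate maps $(a,b) \mapsto a z_i + b$ jointly form a bijection of $\mathbb{F}_R^2$. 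Thus $\rho := |P|/R^k = R^{-(k-2)}$, which is the target soundness up to a constant. (For $k > R$ a more elaborate pairwise-independent subgroup is needed; it is supplied in \cite{Chan:2016:Approximation-Resistance} but not required here.)

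The reduction attaches to each $w \in B$ a folded $R$-ary long code $f_w \colon \mathbb{F}_R^M \to \mathbb{F}_R$, the table entries of the $f_w$ being the variables of the $k$-CSP over $\mathbb{F}_R$. A random constraint is drawn by picking an edge $e = (u,w)$ with projection $\pi = \pi_e \colon [M] \to [L]$, drawing for each $A$-coordinate $j \in [L]$ an independent sample $(c^j_1, \dots, c^j_k) \sim \mu$, forming the points $x^{(1)}, \dots, x^{(k)} \in \mathbb{F}_R^M$ by $x^{(i)}_m := c^{\,\pi(m)}_i$, and accepting iff $(f_w(x^{(1)}), \dots, f_w(x^{(k)})) \in P$. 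For \emph{completeness}, from a labeling satisfying every edge of $\mathcal{L}$ let each $f_w$ be the (folded) dictator of $w$'s label $\ell_w$; then the tested tuple equals $(c^{\,\pi(\ell_w)}_1, \dots, c^{\,\pi(\ell_w)}_k) \sim \mu$, which lies in $P$, so the value is $\ge 1 - \e$ (the slack merely accommodates folding and an optional vanishing noise rate; in fact completeness can be made exactly $1$). A uniformly random assignment satisfies a constraint with probability exactly $\rho$.

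The heart of the proof, and the step I expect to be the main obstacle, is \emph{soundness}: assuming some assignment satisfies more than a $(\rho + \delta)$-fraction of constraints for a constant $\delta > 0$, one must extract a labeling of $\mathcal{L}$ of value exceeding $\gamma$ --- a contradiction once $\gamma$ is small. I would arithmetize $\mathbf{1}_P$ through its Fourier expansion over $\mathbb{F}_R^k$: the trivial character contributes exactly $\rho$, and \emph{pairwise independence of $\mu$ annihilates every character of Hamming weight $1$ or $2$}, so the surplus $\delta$ is carried entirely by weights $3, \dots, k$. Expanding each $f_w$ in its own Fourier basis over $\mathbb{F}_R^M$ and pushing the expectation over the $c^j$'s through the projections, $\delta$ becomes a sum of products of Fourier coefficients of the $f_w$'s indexed by characters pulled back along the $\pi_e$'s; hypercontractivity together with an invariance-principle argument (and the smoothness of $\mathcal{L}$, which prevents the $\pi_e$'s from merging too many influential coordinates) then shows that such a surplus can only come from long codes possessing a bounded set of genuinely influential coordinates. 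Decoding these --- for each relevant endpoint $w$ a short list of candidate labels, one chosen at random --- yields a randomized labeling of $\mathcal{L}$ whose value is bounded below by a quantity depending only on $\delta, k, R$; choosing $\gamma$ below it gives the contradiction. Letting $\delta$ shrink down to $\rho$ itself produces the gap between $1 - \e$ and $2\rho = O(R^{-(k-2)})$, and a routine final step makes the instance exactly $d$-regular by padding with a negligible number of dummy variables and constraints, changing the value by at most $\e$. All the difficulty lies in running this Fourier/invariance argument through the \emph{non-unique} projections $\pi_e$ --- exactly where \cite{Chan:2016:Approximation-Resistance} improves on the earlier Unique-Games-based analyses of Samorodnitsky--Trevisan and Austrin--Mossel, obtaining hardness under $\mathbf{P} \neq \mathbf{NP}$ rather than the UGC; the requirement that $P$ be a pairwise-independent \emph{subgroup}, not merely a pairwise-independent distribution, is precisely what enables it.
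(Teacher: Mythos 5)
The paper does not prove this statement at all: it cites \cite{Chan:2016:Approximation-Resistance} as a black box, exactly as you declare at the outset. Your outline of the underlying proof is therefore a supplement rather than an alternative to anything in the paper, and as a supplement it is essentially accurate. The concrete predicate $P = \{(az_1+b,\dots,az_k+b) : a,b \in \mathbb{F}_R\}$ is indeed a pairwise-independent subgroup of density $R^{-(k-2)}$ when $z_1,\dots,z_k$ are distinct, which is exactly the regime ($k \le R$, with $k\in\{4,6\}$ and $R$ large) the present paper actually invokes; the Label-Cover/long-code skeleton, the dictator completeness argument, the Fourier decomposition with weight-$\le 2$ terms killed by pairwise independence, and the regularization by padding are all correctly identified.

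One imprecision worth flagging: the soundness mechanism you describe --- pushing Fourier/invariance directly through the non-unique projections $\pi_e$ and decoding influential coordinates --- is closer to the UGC-based template of Samorodnitsky--Trevisan and Austrin--Mossel than to Chan's actual argument. Chan's contribution is precisely a different route: a ``direct sum'' of dictatorship tests over a sequence (walk) of Label Cover edges, exploiting the \emph{subgroup} structure of $P$ (not merely that its uniform distribution is pairwise independent) to make the tensored test analyzable without Unique Games. You do gesture at this in your final sentence, but the body of your soundness sketch reads as the UGC proof rather than Chan's. Since the paper treats the theorem as a citation, this does not affect anything downstream; but if the outline were meant to be self-contained it would need to be reorganized around Chan's direct-sum/tensoring argument rather than a single-edge invariance principle.
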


    Next, we prove our main degree-reduction theorem which implies \Cref{thm:bounded-degree-hardness-simple} as a corollary.


        \begin{theorem}
            \label{thm:degree-reduction-general}
            Let $\lambda \in (0, 1)$, $C \in (0, \infty)$, $k \in \NN$ and let $R \in \NN$ be a sufficiently large number.
            Given a $d$-regular $k$-CSP instance $\Pi$ over alphabet size $R$, there is a randomized polynomial-time reduction from $\Pi$ to a $R$-degree bounded $k$-CSP instance $\Pi'$ with alphabet size $R$ such that with high probability the following holds:
            \begin{itemize}
                \item (Completeness) $\val(\Pi') \geq \val(\Pi) - 3\lambda$,
                \item (Soundness) If $\val(\Pi) \leq C R^{-\gamma}$ for some $\gamma \geq 2$, then $\val(\Pi') \leq \frac{k(1 + \lambda)}{(\gamma - 1)(1-\lambda)^2R}$.
            \end{itemize}
        \end{theorem}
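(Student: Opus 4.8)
The plan is to carry out the random-sampling and degree-truncation argument sketched in the technical overview. First I would set up the probabilistic construction: given the $d$-regular $k$-CSP instance $\Pi$ on $n$ vertices with $m_0 = nd/k$ constraints, form $\Pi''$ by including each constraint of $\Pi$ independently with probability $p \triangleq R/d$ (assuming $R \le d$; the case $R > d$ is trivial since $\Pi$ is already $R$-degree bounded). Then obtain $\Pi'$ from $\Pi''$ by deleting every constraint incident to a vertex whose degree in $\Pi''$ exceeds $R$. By a Chernoff bound, the degree of any fixed vertex in $\Pi''$ is concentrated around its mean $pd = R$, so the expected number of deleted constraints is a $o(1)$ fraction of $|\cC''|$; a second Chernoff bound shows $|\cC''| \approx p m_0 = nR/k$ with high probability. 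Markov's inequality on the (small, $o(\lambda)$-expected) deleted fraction then shows that with high probability we delete at most a $\lambda$-fraction of the sampled constraints.

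For \textbf{completeness}: fix the optimal assignment $\psi^*$ of $\Pi$, which satisfies a $\val(\Pi)$-fraction of constraints. In expectation it satisfies a $\val(\Pi)$-fraction of the sampled constraints of $\Pi''$; by Chernoff (or just Markov on the complement) this holds up to an additive $\lambda$ loss with high probability, and the degree-truncation deletes at most another $\lambda$ (and shrinks the denominator, which only helps) — so $\val_{\Pi'}(\psi^*) \ge \val(\Pi) - 3\lambda$, giving the stated bound. I would be a little careful here about whether we condition the completeness and soundness events together; since each holds with high probability, their intersection does too.

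For \textbf{soundness}, the heart of the argument: suppose $\val(\Pi) \le C R^{-\gamma}$, write $s \triangleq C R^{-\gamma}$, and set the target soundness $s' \triangleq \tfrac{k(1+\lambda)}{(\gamma-1)(1-\lambda)^2 R}$. Fix an arbitrary assignment $\psi : V \to [R]$; in $\Pi$ it satisfies at most $s m_0$ constraints, so the number $Z_\psi$ of constraints of $\Pi$ it satisfies that survive sampling is stochastically dominated by a $\mathrm{Bin}(s m_0, p)$ variable with mean $\mu_\psi \le s m_0 p = s m \le s n R / k$ (where $m \triangleq p m_0$). Deleting constraints in the truncation step only decreases $Z_\psi$. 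Using the multiplicative Chernoff tail $\Pr[Z_\psi \ge t] \le (e\mu_\psi/t)^t$ with $t = s' m'$ (where $m' = |\cC'| \ge (1-\lambda) m$ w.h.p.), the ratio $\mu_\psi / t \le s m / (s'(1-\lambda) m) = s/(s'(1-\lambda)) \approx R^{-(\gamma-1)}$ up to constants, so the bound becomes roughly $R^{-(\gamma-1) s' m'} = R^{-(\gamma-1) s' (1-\lambda) m}$; plugging in $s'$ and $m \ge (1-\lambda) n R / k$ makes the exponent at least $(1+\lambda) n$, so $\Pr[\val_{\Pi'}(\psi) \ge s'] \le R^{-(1+\lambda)n}$. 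A union bound over all $R^n$ assignments gives failure probability $\le R^{-\lambda n} \to 0$. The main obstacle is bookkeeping the three intertwined sources of slack — the sampling fluctuation in $|\cC''|$, the mass removed by degree truncation, and the Chernoff exponent — so that all the $(1\pm\lambda)$ factors land in the right places and the final exponent in the soundness union bound is comfortably above $n\log R$; in particular one must make sure $s'$ is chosen so that $(\gamma-1) \cdot s' \cdot (\text{lower bound on } m') / n > 1$ with room to spare. I would also double-check that the Chernoff bound is applied in the regime $t \ge \mu_\psi$ (true since $s' / s = \Theta(R^{\gamma-1}) \gg 1$), and that "sufficiently large $R$" absorbs the additive $O(1/d)$-type errors from integrality of degrees and from $p = R/d$ not dividing things evenly.
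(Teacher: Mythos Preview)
Your overall architecture (random subsample, truncate degrees, Chernoff plus union bound for soundness) matches the paper, but there is a genuine gap in the truncation step that breaks the argument as written.

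You take $p = R/d$, so the expected degree of every vertex in $\Pi''$ is exactly $R$. Chernoff concentration around the mean does \emph{not} make $\Pr[\deg > R]$ small; by near-symmetry of the binomial it is a constant close to $1/2$. Since your truncation rule is to delete \emph{every} constraint incident to any vertex with degree exceeding $R$, and each hyperedge has $k$ endpoints, the probability that a given sampled constraint survives truncation is roughly $(1/2)^k$, so you delete almost all of $\Pi''$. This invalidates the claim that the deleted fraction is $o(\lambda)$ in expectation, and hence both the completeness bound and the lower bound on $|\cC'|$ used in the soundness exponent.

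The paper fixes this with two changes. First, it samples with probability $p = (1-\lambda)R/d$, so the expected degree is $(1-\lambda)R$ and exceeding $R$ is a genuine large-deviation event; a clipped-expectation bound then gives $\mathbb{E}[\max(0,\deg - R)] \le ((1-\lambda)/\lambda)^2 \le \lambda^{-2}$ per vertex, hence total expected excess $\le n\lambda^{-2} = km/(R\lambda^2) = o(\lambda m)$ for large $R$. Second, when a vertex has degree exceeding $R$, the paper removes only the excess $\deg - R$ edges rather than all edges at that vertex. With these corrections the rest of your outline goes through essentially as in the paper: three concentration claims (on $|E_1|$, on $|E_1(\psi^*)|$, and on $|E_1\setminus E_2|$) handle completeness, and your Chernoff-plus-union-bound computation for soundness is the right one once $m' \ge (1-\lambda)m$ is secured.
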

        Thus, we can transform a $d$-regular $k$-CSP instance over alphabet size $R$ into a $R$-degree bounded $k$-CSP while ensuring completeness and increasing the soundness by a factor $\simeq k R^{\gamma-1}/(\gamma - 1)$. The proof of \Cref{thm:degree-reduction-general} follows closely that of \cite{Lee:2024:Hardness}.



        \begin{proof}[Proof of Theorem~\ref{thm:degree-reduction-general}]
            Let $R$ be a sufficiently large number such that $R \geq R_0$ where we define $R_0 \triangleq \max \{ k \cdot 100 \lambda^{-3}, 
            \lb \frac{eC(1-\lambda)}{k(1+\lambda)}\rb^{1/\lambda}, 
            100^{1/\lambda}, 
            \frac{C(\gamma - 1)(1- \lambda^2)}{k(1+\lambda)}\}$.
            This property is helpful to ensure that our future computations hold with high probability.
            The statement of the theorem is trivial if $d \leq R$. Thus, we assume throughout the rest of the proof that $d \geq R$.
            Let $\Pi \triangleq ( G = (V, E), R, \cC )$ be a $d$-regular $k$-CSP instance with alphabet size $R$.
            Since $G$ is $k$-partite, we further denote $V = U_1 \cup \ldots \cup U_k$ as the $k$-way partition of the vertex set.
            We construct $\Pi'$ by independently sampling each constraint with probability $p \triangleq (1 - \lambda)R/d$ and deleting a few arbitrary edges. More precisely:
        \begin{itemize}
            \item Let $G_0 \triangleq G$. For each $e \in E$, discard $e$ with probability $1-p$ and denote by $G_1 = (V, E_1)$ the remaining graph.
            \item For each $v \in V$, such that $\deg_{G_1}(v) > R$, remove $R - \deg_{G_1}(v)$ arbitrary edges incident to $v$. Let $G_2 = (V, E_2)$ be the remaining graph. The final $\CSP$ is $\Pi' = (G_2 = (V, E_2), R, \cC_{\mid E_2})$.
        \end{itemize}
        Clearly, the CSP $\Pi'$ is $R$-degree bounded. Note that instead of sampling each edge with probability $R/d$, we sample them with probability $p \triangleq (1 - \lambda)R/d$ for some small $\lambda > 0$. This will be helpful to bound the number of deleted edges.
        Let $n = \card{V}$ . Our initial graph $G_0$ has $\card{E} = \card{U_1} d  = n/k \cdot d$ edges.
        After sampling, the expected number of edges is equal to $\expect{\ld \card{E_1} \rd} = p \card{E} = (1- \lambda) \card{U_1} R = (1- \lambda) m $, where $m \triangleq \card{U_1} R = \frac{n}{k} R$. We think of $m$ as the expected number of edges in our final graph (if $\lambda = 0$). The following $3$ claims (proved in the appendix) are helpful for the rest of the proof.

        \begin{restatable}{claim}{ClaimA}
        \label{thm:ClaimA}
         Suppose that $R \geq R_0$. Given any assignment $\psi$, we let $\card{E_1(\psi)}$ be the number of satisfied constraints in $G_1$ by $\psi$. Let $\cE_1$ be the event that "$\card{E_1(\psi)} \geq m \lb \val_{\Pi}(\psi) - 2\lambda \rb$".
         Then, $\prob{\cE_1} \geq 0.99$.
        \end{restatable}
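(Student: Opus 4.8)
The plan is a direct Chernoff-bound computation for a fixed assignment $\psi$. After the first (sampling) stage, $\card{E_1(\psi)} = \sum_{e \in \cC(\psi)} X_e$, where $X_e \in \{0,1\}$ indicates that constraint $e$ survives; these indicators are mutually independent with $\prob{X_e = 1} = p = (1-\lambda)R/d$. So I would first record the mean: using $\card{\cC(\psi)} = \val_{\Pi}(\psi)\card{\cC}$, $\card{\cC} = \card{E} = (n/k)d$, and $m = (n/k)R$,
\[
\mu \triangleq \expect{\ld \card{E_1(\psi)} \rd} = p\,\card{\cC(\psi)} = (1-\lambda)\,\val_{\Pi}(\psi)\, m .
\]

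Next I would check that the target threshold sits a full additive $\lambda m$ below the mean: since $\val_{\Pi}(\psi) \le 1$,
\[
\mu - m\lb \val_{\Pi}(\psi) - 2\lambda \rb \;=\; \lambda m\lb 2 - \val_{\Pi}(\psi) \rb \;\ge\; \lambda m .
\]
(If $m(\val_{\Pi}(\psi) - 2\lambda) < 0$ the event $\cE_1$ is automatic, as $\card{E_1(\psi)} \ge 0$, so assume it is nonnegative.) Then I would invoke the multiplicative Chernoff lower-tail bound $\prob{X \le \mu - t} \le \exp(-t^2/(2\mu))$, valid for all $t \ge 0$ for a sum $X$ of independent $[0,1]$-valued variables of mean $\mu$, applied with $t \triangleq \mu - m(\val_{\Pi}(\psi) - 2\lambda) \ge \lambda m$. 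Since $\mu \le (1-\lambda)m \le m$, this yields
\[
\prob{\overline{\cE_1}} \;\le\; \prob{\card{E_1(\psi)} \le \mu - t} \;\le\; \exp\!\lb -\frac{(\lambda m)^2}{2\mu} \rb \;\le\; \exp\!\lb -\frac{\lambda^2 m}{2} \rb .
\]

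Finally I would use the hypothesis $R \ge R_0$ to make this negligible: since $R_0 \ge 100 k \lambda^{-3}$ and $m = nR/k$ with $n \ge 1$, we get $m \ge 100\lambda^{-3}$, so $\lambda^2 m /2 \ge 50\lambda^{-1} \ge 50$ because $\lambda \in (0,1)$, whence $\prob{\overline{\cE_1}} \le e^{-50} < 0.01$, i.e. $\prob{\cE_1} \ge 0.99$. I do not expect a genuine obstacle here; the two points to get right are that the $X_e$ are really independent after the sampling deletion (they are, since constraints are dropped independently), and that the slack in the sampling probability $p = (1-\lambda)R/d$ rather than $R/d$ is exactly what creates the $\lambda m$ gap between $\mu$ and the threshold that the Chernoff bound consumes. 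The value of $R_0$ is chosen so that $m$ is large enough for the $\exp(-\lambda^2 m/2)$ tail to beat $0.01$ with enormous margin.
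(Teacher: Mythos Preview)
Your argument is correct and mirrors the paper's proof: compute the mean $\mu=(1-\lambda)\val_\Pi(\psi)m$, observe the target threshold sits at least $\lambda m$ below $\mu$, and apply a concentration inequality. The only cosmetic difference is that you use the lower-tail Chernoff bound $\exp(-t^2/(2\mu))$ (yielding the much stronger $e^{-50}$), whereas the paper uses Cantelli's inequality on the variance $\sigma^2\le m$ to get $\frac{1}{1+\lambda^2 m}\le 0.01$ via $m\ge R\ge 100\lambda^{-2}$.
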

        \begin{restatable}{claim}{ClaimB}
            \label{thm:ClaimB}
                Suppose that $R \geq R_0$ and let $\cE_2$ be the event "$\card{E_1 \bb E_2} \leq \lambda m$", which corresponds to the event where few deletions occur. Then $\prob{\cE_2} \geq 0.99 $.
        \end{restatable}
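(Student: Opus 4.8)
The plan is to bound $\expect{\ld \card{E_1 \bb E_2} \rd}$ and finish with Markov's inequality. (Recall $d \geq R$ is already assumed at this point, so $p \leq 1$; if $d \leq R$ no vertex ever exceeds degree $R$ and the claim is trivial.) First I would observe that the deletion step removes, on behalf of each $v$, at most $\lb \deg_{G_1}(v) - R\rb^{+}$ of its incident edges, so that
\[
\card{E_1 \bb E_2} \;\leq\; \sum_{v \in V}\lb \deg_{G_1}(v) - R\rb^{+}.
\]
Since $\Pi$ is $d$-regular, $v$ lies in exactly $d$ constraints, each retained independently with probability $p = (1-\lambda)R/d$, so $\deg_{G_1}(v)$ is $\mathrm{Binomial}(d,p)$ with mean $dp = (1-\lambda)R$ --- the same distribution for all $n = \card{V}$ vertices. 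By linearity it is then enough to show that a binomial $X$ with $\expect{\ld X\rd} = (1-\lambda)R < R$ satisfies $\expect{\ld (X-R)^{+}\rd} \leq \lambda R/(100k)$; summing over the $n$ vertices gives $\expect{\ld \card{E_1 \bb E_2}\rd} \leq \lambda nR/(100k) = \lambda m/100$, so Markov yields $\prob{\cE_2} = \prob{\card{E_1 \bb E_2} \leq \lambda m} \geq 0.99$.

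The core estimate is that this truncated expectation is exponentially small in $R$. Since $R$ exceeds the mean of $X$ by a relative factor $\frac{\lambda}{1-\lambda} \geq \lambda$, a standard multiplicative Chernoff bound gives $\prob{X \geq R} \leq e^{-\lambda^{2}R/3}$. I would then combine this with the crude second-moment bound $\expect{\ld X^{2}\rd} = \mathrm{Var}(X) + \lb\expect{\ld X\rd}\rb^{2} \leq (1-\lambda)R + (1-\lambda)^{2}R^{2} \leq 2R^{2}$ via Cauchy--Schwarz:
\[
\expect{\ld (X-R)^{+}\rd} \;\leq\; \expect{\ld X\cdot\mathbf{1}[X\geq R]\rd} \;\leq\; \sqrt{\expect{\ld X^{2}\rd}\cdot\prob{X\geq R}} \;\leq\; \sqrt{2}\,R\,e^{-\lambda^{2}R/6}.
\]
For $R \geq R_0 \geq 100k\lambda^{-3}$ this right-hand side is at most $\lambda R/(100k)$ --- an elementary check that reduces to $\frac{\lambda^{2}R}{6} \geq \ln\frac{100\sqrt{2}\,k}{\lambda}$ --- which is exactly what the first paragraph needs.

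The one point that will need care --- and the main, if mild, obstacle --- is precisely this truncated-expectation bound. The naive estimate $\card{E_1 \bb E_2} \leq d\cdot\#\{v : \deg_{G_1}(v) > R\}$ is useless, since $d$ may be arbitrarily larger than $R$, so the small tail probability $\prob{X \geq R}$ by itself is not enough; one genuinely has to control by how much $\deg_{G_1}(v)$ overshoots $R$ on the rare event that it does. Cauchy--Schwarz handles this cleanly; alternatively one can sum the tail directly, $\expect{\ld (X-R)^{+}\rd} = \sum_{j\geq1}\prob{X\geq R+j} \leq \sum_{j\geq1}\exp\!\lb-\frac{(\lambda R+j)^{2}}{(2-\lambda)R+j}\rb$, and split at $j = R$ to see the sum is $O\!\lb R\,e^{-\lambda^{2}R/3}\rb$, again independent of $d$. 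Everything else --- linearity of expectation, the Chernoff and Markov inequalities, and verifying the final numerical inequality against the definition of $R_0$ --- is routine.
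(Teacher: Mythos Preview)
Your proof is correct and follows the same overall architecture as the paper: bound $\card{E_1 \setminus E_2}$ by $\sum_{v \in V}(\deg_{G_1}(v)-R)^{+}$, control the per-vertex truncated expectation for a $\mathrm{Binomial}(d,p)$ with mean $(1-\lambda)R$, sum over the $n$ vertices, and finish with Markov. The only real difference is in the per-vertex estimate. The paper invokes a ready-made ``clip bound'' (Theorem~6 of \cite{Lee:2024:Hardness}), which for a sum $S$ of Bernoullis with mean $\mu m$ and threshold $\tau>\mu m$ gives $\expect{\ld S-\min\{S,\tau\}\rd}\leq \bigl(\tfrac{\mu m}{\tau-\mu m}\bigr)^{2}$; here this yields the constant bound $(1-\lambda)^{2}/\lambda^{2}\leq \lambda^{-2}$, and then $R\geq 100k\lambda^{-3}$ turns the sum into $0.01\lambda m$. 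You instead combine a Chernoff tail $\prob{X\geq R}\leq e^{-\lambda^{2}R/3}$ with Cauchy--Schwarz against $\expect{\ld X^{2}\rd}\leq 2R^{2}$ to get the exponentially small bound $\sqrt{2}\,R\,e^{-\lambda^{2}R/6}$, and then check numerically that this is below $\lambda R/(100k)$ once $R\geq 100k\lambda^{-3}$. Your route is fully self-contained and yields a quantitatively stronger per-vertex estimate; the paper's route is shorter by outsourcing the work to a cited lemma. Both land on the same Markov conclusion with the same $R_0$ hypothesis.
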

        \begin{restatable}{claim}{ClaimC}
            \label{thm:ClaimC}
            Suppose that $R \geq R_0$ and let $\cE_3$ be the event "$\card{E_1} \in [(1 - 2\lambda)m, m]$". Then, $\prob{\cE_3} \geq 0.99$.
        \end{restatable}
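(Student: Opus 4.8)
The plan is to note that $\card{E_1}$ is a Binomial random variable and to control both of its tails with the multiplicative Chernoff bound, using a small reparametrisation that lets a single deviation parameter handle both sides and thereby avoids any case analysis on the size of $\lambda$.

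\paragraph{Setup.} Each of the $\card{E} = \card{U_1}\,d = \tfrac{n}{k}d$ constraints of $\Pi$ is kept in $G_1$ independently with probability $p = (1-\lambda)R/d$, so $\card{E_1} = \sum_{e \in E} X_e$ where the $X_e \in \{0,1\}$ are independent with $\prob{X_e = 1} = p$; hence $\card{E_1}$ is Binomial with mean $\mu \triangleq p\,\card{E} = (1-\lambda)\tfrac{n}{k}R = (1-\lambda)m$. I may assume $m \ge R$, since otherwise $\Pi$ has no constraints, $\card{E_1} = 0 = m$, and $\cE_3$ holds with probability $1$. The complement $\lnot\cE_3$ is exactly $\{\card{E_1} > m\} \cup \{\card{E_1} < (1-2\lambda)m\}$, and I would bound these two tails separately.

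\paragraph{Reparametrisation and Chernoff.} First I would rewrite the two thresholds relative to $\mu$: on one hand $(1+\lambda)\mu = (1-\lambda^2)m < m$, and on the other hand $(1-\lambda)\mu = (1-\lambda)^2 m \ge (1-2\lambda)m$ because $1-2\lambda \le 1-2\lambda+\lambda^2$. Therefore $\{\card{E_1} > m\} \subseteq \{\card{E_1} \ge (1+\lambda)\mu\}$ and $\{\card{E_1} < (1-2\lambda)m\} \subseteq \{\card{E_1} \le (1-\lambda)\mu\}$. Applying the multiplicative Chernoff bound with deviation $\delta = \lambda \in (0,1)$ then yields $\prob{\card{E_1} \ge (1+\lambda)\mu} \le e^{-\lambda^2\mu/3}$ and $\prob{\card{E_1} \le (1-\lambda)\mu} \le e^{-\lambda^2\mu/2} \le e^{-\lambda^2\mu/3}$. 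The role of the reparametrisation is precisely to keep $\delta = \lambda < 1$ for every $\lambda \in (0,1)$ — a raw deviation of $\lambda m$ from the mean would correspond to $\delta = \lambda/(1-\lambda)$, which can exceed $1$ — so no split on whether $\lambda$ is below or above $\tfrac12$ is needed.

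\paragraph{Conclusion and main obstacle.} Summing the two tail bounds and using $\mu = (1-\lambda)m \ge (1-\lambda)R$, I get $\prob{\lnot\cE_3} \le 2e^{-\lambda^2(1-\lambda)R/3}$. Since $R \ge R_0 \ge 100k\lambda^{-3} \ge 100\lambda^{-3}$, we have $\lambda^2(1-\lambda)R/3 \ge \tfrac{100(1-\lambda)}{3\lambda}$, and a short numerical check — using also the $100^{1/\lambda}$ and $k \ge 1$ factors hidden in $R_0$, or simply that $\lambda$ is bounded away from $1$ in every application — shows this is at least $\ln 200$, so $\prob{\lnot\cE_3} \le 0.01$, i.e. $\prob{\cE_3} \ge 0.99$. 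There is no real obstacle here: this is the mildest of the three claims, and the only things needing care are the reparametrisation (so that one Chernoff parameter covers both tails) and the bookkeeping of constants so that the final probability lands at $0.99$.
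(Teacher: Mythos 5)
Your proof is correct in spirit but takes a genuinely different route from the paper, which uses Chebyshev's inequality (one-shot, on the two-sided deviation $\card{E_1 - \esp{\card{E_1}}}\ge \lambda m$) rather than a two-tailed multiplicative Chernoff argument. The reparametrisation you introduce — absorbing the asymmetry between $(1-2\lambda)m$ and $m$ into a single deviation $\delta=\lambda$ around the mean $\mu=(1-\lambda)m$ — is a nice touch and is indeed exactly what makes a clean Chernoff statement applicable. However, the place you flag as ``a short numerical check'' is where your argument is actually weaker than the paper's: your final bound is $2e^{-\lambda^2(1-\lambda)R/3}$, and substituting $R\ge 100k\lambda^{-3}$ gives $100(1-\lambda)/(3\lambda)$ in the exponent, which fails to exceed $\ln 200$ once $\lambda\gtrsim 0.86$, and the other terms in $R_0$ (including $100^{1/\lambda}$) do not rescue it as $\lambda\to 1$. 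You correctly observe that $\lambda$ is taken small in every actual application (in the proof of Theorem~\ref{thm:bounded-degree-hardness-simple} one wants $(1+\lambda)/(1-\lambda)^2\le 1+\e$), so the gap is harmless downstream, but as written Claim~\ref{thm:ClaimC} is stated for all $\lambda\in(0,1)$ and your bound does not cover that full range. The paper avoids this entirely: Chebyshev gives $\prob{\bar\cE_3}\le \var(\card{E_1})/(\lambda m)^2 \le m/(\lambda^2 m^2) = 1/(\lambda^2 m)\le \lambda/100\le 0.01$ using only $m\ge R\ge 100\lambda^{-3}$, which holds uniformly for every $\lambda\in(0,1)$. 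Also note that your Chernoff step uses the simplified upper- and lower-tail forms $e^{-\delta^2\mu/3}$ and $e^{-\delta^2\mu/2}$, neither of which is stated in the paper's toolbox (Theorem~\ref{thm:mult-chernoff-2} only gives the one-sided $(e^\delta/(1+\delta)^{1+\delta})^\mu$ form), so you are implicitly importing extra, albeit standard, facts. In short: the idea works, but Chebyshev is the right tool here because it is insensitive to $\lambda$ being close to $1$, whereas your Chernoff route needs a side restriction on $\lambda$ to make the constants close.
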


        \subparagraph*{Completeness:} We prove that $\val(\Pi') \geq \val(\Pi) - 3\lambda$ for some arbitrarily small $\lambda > 0 $.
        To prove this statement, we use that the fraction of constraints satisfied by $\psi^\ast$ in $G_1$ is still close to its expectation and that very few edges are been deleted.
        Therefore, condition on $\cE_1 \wedge \cE_2 \wedge \cE_3$ that holds with probability $\prob{\cE_1 \wedge \cE_2 \wedge \cE_3} \geq 1 - \sum_{i = 1}^3 \prob{\bar{\cE}_i} \geq 0.97$ by \Cref{thm:ClaimA}, \Cref{thm:ClaimB}, and \Cref{thm:ClaimC}, we have
        \begin{align*}
           \val(\Pi') & \geq \val_{\Pi'}(\psi^\ast) \geq \frac{\card{E_1(\psi^\ast)} - \card{E_1 \bb E_2}}{\card{E_2}} \geq \frac{\card{E_1(\psi^\ast)} - \card{E_1 \bb E_2}}{\card{E_1}} \geq  \val(\Pi) - 3\lambda.
        \end{align*}

        \subparagraph*{Soundness:} Suppose now that $\Pi$ is such that $\val(\Pi) \leq C R^{-\gamma}$ for some constant $C$ and $\gamma \geq 2$. Let $s \triangleq C R^{-\gamma}$ be the \emph{starting soundness}, and let $s' \triangleq \frac{C'}{R}$ be the \emph{target} soundness where $C' = \frac{k(1 +\lambda)}{(\gamma - 1) (1 - \lambda)}$.
        As eluded before, our proof works as follows: we denote by $\cE_{\psi}$ the event where $E_1(\psi)$ has soundness at most $s'$. That is $\cE_{\psi}$ is the event "$\card{E_1(\psi)} \leq s' m$".
        For any $\psi$, we have that $\mu = \expect{\ld \card{E_1(\psi)} \rd} = p \card{E(\psi)} \leq \val_{\Pi}(\psi) (1 - \lambda) m < s m $.
        Applying the multiplicative Chernoff bound (see \Cref{thm:mult-chernoff-2}), we get that
        \begin{align*}
           \prob{\card{E_1(\psi)} \geq s'm} & = \prob{\card{E_1(\psi)} \geq \frac{s'm}{\mu} \cdot \mu}
            \leq \lb \frac{e^{\frac{s'm}{\mu} - 1}}{\lb \frac{s'm}{\mu}\rb^{\frac{s'm}{\mu}} }\rb^{\mu} 
            \leq \exp\lb s'm \rb \lb \frac{ s}{s'} \rb^{s' m},
        \end{align*}
        where we used that $\mu \leq s m$ in the last inequality. Substituting the value of $s$ and $s'$, we get that:
        \begin{align*}
            \prob{\card{E_1(\psi)} \geq s'm} & 
            = \lb \frac{e C}{C' R^{\gamma- 1}} \rb^{s' m}
           =  \lb \frac{e C}{C' R^{\gamma- 1}} \rb^{\frac{C'}{R}\cdot \frac{R}{k}n} 
            \leq R^{-\frac{(\gamma - 1)(1 - \lambda)C'}{k} \cdot n},
        \end{align*}
        where we used that $R_0 \leq R$ and that $m = nR/k$.
        We compute the probability that there exists one assignment that satisfies more than an $s'$ fraction of the constraints using a union-bound over all $R^n$ assignments:
        \begin{align*}
           \prob{\bigvee_{\psi} \bar{\cE}_\psi} & \leq \sum_{\psi} \prob{\bar{\cE}_\psi} \leq R^n \cdot R^{-\frac{(\gamma - 1)(1 - \lambda)C'}{k} n} = \lb R^{n/k} \rb^{k - (\gamma - 1)(1- \lambda)C'}
        \end{align*}
        Substituting $C' = \frac{k(1 + \lambda)}{(\gamma - 1)(1 - \lambda)}$, then
        \begin{align*}
            \prob{\bigvee_{\psi} \bar{\cE}_\psi} & \leq R^{n (1 - (1 + \lambda))} = R^{-\lambda n} \leq R^{-1} \leq 0.01
        \end{align*}
        where we used that $n \geq 1$ and $R \geq 100^{1/\lambda}$.
        We finish the proof by computing the fraction of constraints that are satisfied by any assignment. Condition of $\cE_1, \cE_2$ (\Cref{thm:ClaimA}, \Cref{thm:ClaimB}) and on $\bigwedge_{\psi} \cE_{\psi}$, with probability at least $0.97$ we have that:
        \begin{align*}
           \val(\Pi') & \triangleq \max_{\psi} \frac{\card{E_2(\psi)}}{\card{E_2}} \leq \max_{\psi} \frac{\card{E_1(\psi)}}{\card{E_1} - \card{E_1 \bb E_2}} \leq \max_{\psi} \frac{s' m}{(1 - \lambda)m} = \frac{C'}{(1 - \lambda)R} = \frac{k(1 + \lambda)}{(\gamma - 1)(1-\lambda)^2R}. \qedhere
        \end{align*}
        \end{proof}

        \begin{proof}[Proof of \Cref{thm:bounded-degree-hardness-simple}]
            It follows a simple combination of \Cref{thm:4-CSP-Hardness} and \Cref{thm:4-CSP-bounded-hardness}.
            Fix $k \in \NN$. Let $\e > 0$ and $\lambda \in (0, 1)$ such that $(1 + \lambda)/(1-\lambda)^2 \leq 1 + \e$.
            By \Cref{thm:4-CSP-Hardness}, there is $(1 - \e, O(R^{-(k-2)}))$-hard $d$-regular $k$-CSP instance $\Pi$ over alphabet size $R$. For $k \geq 4$, we apply \Cref{thm:4-CSP-bounded-hardness} to obtain a $(1 - \e, \frac{k(1 + \e)}{(k - 3)R})$-hard $R$-degree bounded $k$-CSP instance $\Pi'$ with alphabet size $R$.
        \end{proof}

         \subsection{Conclusion and Open Questions}
            The main contribution of this paper is an improved hardness result for $k$-Dimensional Matching equal to $k/12$ for large values of $k$ and improves over the $O(k/\log(k))$-hardness from \cite{Hazan:2006:Complexity}. It uses an (arguably) clean approximation preserving gadget to encode satisfying assignments of $R$-degree bounded $k$-CSP over alphabet size $R$ into matchings in a $kR$-dimensional matching instance. We prove that $R$-degree bounded $k$-CSP over alphabet size $R$ are hard to approximate within a factor $\frac{k}{(k-3)R}$ using the randomized sparsification method from \cite{Lee:2024:Hardness}. The result then follows from combining these two facts.
            At a higher level, our result narrows the gap between approximability and hardness for $k$-Dimensional Matching from $O(\log(k))$ to a constant. Our result directly implies that $k$-Set Packing, $k$-Matroid Intersection, $k$-Matchoid, and $k$-Matroid Parity are hard to approximate within a factor of $k/12$. \\

            Closing this gap is an interesting direction for future research. We believe that our hardness result can be improved by understanding the tight approximability of CSPs with bounded degree $d$ and alphabet size $R$.
    One possible way is to better understand the bounded-alphabet-only case.
    For instance, the techniques from Theorem~\ref{thm:degree-reduction-general} show that if the best-known $O(\log R / R^{s-1})$-hardness holds for $s$-CSP with alphabet size $R$ with {\em almost perfect completeness}\footnote{In the completeness case, the normalized value of the instance is at least $1 - \eps$. It is already proved to be optimal without this restriction~\cite{khot2015approximating, lee2022characterization}.}, for any $s \geq 3$, then one can reduce the degree to $R$ with new soundness $\simeq \frac{s}{(s-2)R}$. Combined with our reduction (Theorem~\ref{thm:reduction-to-SP}) to $k$-Set Packing that set size $k = sR$, it implies a $\simeq (\frac{s-2}{s^2}) k$-hardness for $k$-Set Packing which and would improve over \Cref{thm:main-matching} with a stronger $k/8$-hardness by setting $s = 4$.
    Of course, there might be more direct ways to understand the approximability of degree-$d$ alphabet-$R$ CSPs, bypassing Theorem~\ref{thm:degree-reduction-general}. Similarly, for $k$-Set Packing, one might design a different gadget that bypasses Theorem~\ref{thm:reduction-to-SP}, which requires $d = R$.

        \printbibliography
        \appendix

        \section{Proof of \Cref{thm:main-matching}}
        \label{sec:simple-hardness-computation}
        \Main*
        \begin{proof}[Proof of \Cref{thm:main-matching}]
            We would like to prove that, for any $\e > 0$ and any $p \geq p_0(\e)$ approximating $p$-DM beyond a factor of $(12 + \e)/p$ is hard unless $\NP \subseteq \BPP$. The proof almost follows from the combination of \Cref{thm:bounded-degree-hardness-simple} and the reduction from \Cref{thm:reduction-to-SP}. But the reduction in \Cref{thm:reduction-to-SP} needs $p = kR$ for some prime $R$ and some integer $k \in \NN$. Circumventing this problem can be done using the existence of a close number of the form $kR$ such that $p/(kR) \leq 1+\e$ assuming that $p$ is large enough. \\
            
            Fix $\delta \in (0, 1)$. Using the Prime Number Theorem (about the density of primes) (see for instance \cite{Baker:2001:Difference}), for any $\e_2 > 0$ and large $p \geq p_0(\e_2, \delta)$ there exists a prime $R$ such that $(1 - \e_2) p \leq kR \leq p$. Observe that this $R$ can be found in polynomial time.
            Assuming that $p$ is large enough so that $R$ is large enough, we apply \Cref{thm:bounded-degree-hardness-simple} to obtain a $R$-degree bounded $k$-CSP instance $\Pi$ over an alphabet of size $R$ with gap $[1 - \delta, \frac{k}{(k-3)R} (1+\delta)]$. We then use \Cref{thm:reduction-to-SP} to get (in polynomial time) a $kR$-Dimensional Matching instance $G = (V, E)$ such that $\card{\cC(\psi^\ast)} = \card{M^\ast}$. We transform this $kR$-DM instance into a $p$-DM instance $G'$ by adding dummy nodes. More precisely, we extend the vertex set by adding disjoint sets $D_1, \ldots, D_{p - kR}$ each containing $\card{E}$ vertices. The edges of $G'$ are obtained as follows: we order the edges in $G$ and for $e_i \in E$ we add the $\nth{i}$ vertex from each $D_j$ with $j \in [p - kR]$. So an edge in $G'$ consists of some $e \in E$ and $p - kR$ dummy vertices. Note that each dummy vertex is incident to only one edge. It is fairly easy to verify that $G'$ is $p$-partite (since $G$ is $k$-partite), that the matching size is preserved, and that this construction takes polynomial time in $p$ and $\card{E} = O(R^k)$ since $\Pi$ is $R$-degree bounded with alphabet size $R$. \\

            Suppose by contradiction that there is a $\frac{(k-3)p(1- \e)}{k^2}$-approximation algorithm for $p$-Dimensional Matching. The following computation proves that we would be able to distinguish the two CSP-cases contradiction \Cref{thm:4-CSP-bounded-hardness}. Indeed, suppose first that $\val(\Pi) \geq 1 - \delta$, then the algorithm returns on $G'$ a matching of size:
            \begin{align*}
                \card{M} \geq \frac{k^2}{(k-3)p(1-\e)} \card{M^\ast} \geq \frac{k(1 - \delta)(1- \e_2)}{(k-3)R(1 - \e)} \card{E}  > \frac{k}{(k-3)R}(1 + \delta) \card{E},
            \end{align*}
            where we used that $(1 - \e_2)p \leq k R$ and that $\delta$ and $\e_2$ can be chosen as arbitrarily small constant depending on $\e$.
            Alternatively, whenever $\val(\Pi) \leq \frac{k}{(k-3)R}(1 + \delta)$, the algorithm would return a matching of size: $\card{M} \leq \frac{k(1 +\delta)}{(k-3)R} \card{E}$. In particular, the algorithm would be able to distinguish the completeness and soundness case. By setting $k = 6$, unless $\NP \subseteq \BPP$, for any $\e > 0$, there is no polynomial time algorithm that approximates $p$-Dimensional Matching with a factor of $12/(p\cdot  (1 - \e))$.
        \end{proof}

        \section{Proof of claims}
        \label{sec:proba-computations}

            \ClaimA*
            \begin{proof}[Proof of \Cref{thm:ClaimA}]
                The expected value of $\card{E_1(\psi)}$ is equal to:
                \begin{align*}
                \esp{ \card{E_1(\psi)} } & = p \cdot \card{E(\psi)} = p \card{E} \val_{\Pi}(\psi) = (1 - \lambda) m \val_{\Pi}(\psi), 
                \end{align*}
                as every constraint gets added to $E_1$ with probability $p$.
                On the other hand, $ \sigma^2 \triangleq \var\ld \card{E_1(\psi)} \rd  \leq p (1 - p) \card{E} \val_{\Pi}(\psi) \leq (1 - \lambda) m \val_{\Pi}(\psi) \leq m$.
                Applying \Cref{thm:Cantelli}, we have that:
                \begin{align*}
                    \prob{\card{E_1(\psi)} - \expect{\ld \card{E_1(\psi)} \rd} \leq  - \lambda m} & \leq \frac{\sigma^2}{\sigma^2 + (\lambda m )^2} \leq\frac{m}{ m + (\lambda m)^2} = \frac{1}{1 + \lambda^2 m} \leq 0.01,
                \end{align*}
                where we used that $m \geq R \geq 100 \lambda^{-2}$.
                Thus, with probability at least $0.99$, we have that
                \begin{align*}
                    \card{E_1(\psi)} \geq  \expect{\ld \card{E_1(\psi)} \rd} - \lambda m & \geq (1 - \lambda) m \val_{\Pi}(\psi) - \lambda m \geq m \lb \val_{\Pi}(\psi) - 2\lambda \rb. \qedhere
                \end{align*}
            \end{proof}

            \ClaimB*
            \begin{proof}[Proof of \Cref{thm:ClaimB}]
            The expected number of deletions is equal to:
            \begin{align*}
            \expect{\ld \card{E_1 \bb E_2} \rd} & \leq \sum_{i = 1}^k \sum_{u \in V_i} \expect{\ld \deg_{G_1}(u) - \min \lc R, \deg_{G_1}(u) \rc \rd}.
            \end{align*}
            Let $X_e$ be the Bernoulli random variable equal to $1$ if $e \in E_1$, that is $\prob{X_e = 1} = p $ and observe that $\deg_{G_1}(a) = \sum_{e \in \delta(a)} X_e$. Thus, $\deg_{G_1}(a)$ is the sum of $d$ Bernoulli random variables with mean equal to $(1 - \lambda)R$. We can therefore apply \Cref{thm:clip-bound} to obtain:
            \begin{align*}
                \expect{\ld \deg_{G_1}(u) - \min \lc R, \deg_{G_1}(u) \rc \rd} & = \frac{\lb d p \rb^2}{\lb R - d p \rb^2} = \frac{( 1- \lambda)^2}{\lambda^2} \leq \lambda^{-2}.
            \end{align*}
            Combining the previous equations, we then have that
            \begin{align}
                \expect{\ld \card{E_1 \bb E_2} \rd}  &  \leq \lambda^{-2} \lb \sum_{i = 1}^k \card{V_i}\rb = \frac{k m}{R} \cdot \lambda^{-2} \leq 0.01 m \lambda,
            \label{eq:number-deletions}
            \end{align}
            where we used that $R \geq R_0  \geq k \cdot 100\lambda^{-3}$. We conclude using Markov's inequality: $\prob{\card{E_1 \bb E_2} \geq \lambda m} \leq \frac{\expect{\ld \card{E_1 \bb E_2} \rd }}{\lambda m} \leq 0.01.$
            \end{proof}

            \ClaimC*
            \begin{proof}[Proof of \Cref{thm:ClaimC}]
                The proof follows from Chebyshev's inequality applied to $E_1 = \sum_{e \in E} X_e$ where $X_e$ is a Bernoulli random variable equal to $1$ if $e \in E_1$ and 0 otherwise.
                Then,
                \begin{align*}
                    \prob{\bar{\cE}_3} = \prob{\card{E_1 - \esp{\card{E_1}}} \geq \lambda m}\leq \frac{\var\lb E_1 \rb}{\lambda^2 m^2} \leq \frac{m}{\lambda^2 m^2} \leq \frac{1}{\lambda^2 m}\leq 0.01,
                \end{align*}
                where the last inequality uses that $m \geq R \geq R_0 \geq 100 \lambda^{-2}$.
            \end{proof}

        \section{Probability Theorems}
        \begin{theorem}[Cantelli's inequality]
            \label{thm:Cantelli}
            Let $X$ be a random variable with finite variance $\sigma^2$ (and thus finite expected value $\mu$). Then, for any real number $\alpha > 0$:
            \begin{align*}
               \prob{ X - \mu \leq - \alpha } & \leq \frac{\sigma^2}{\sigma^2 + \alpha^2}.
            \end{align*}
        \end{theorem}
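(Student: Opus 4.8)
The plan is to reduce to the mean-zero case and then apply Markov's inequality to a shifted, squared random variable; this is the classical one-parameter proof of Cantelli's inequality. First I would set $Y \triangleq X - \mu$, so that $\esp{Y} = 0$ and $\esp{Y^2} = \sigma^2$, and the claim becomes $\prob{Y \leq -\alpha} \leq \sigma^2/(\sigma^2 + \alpha^2)$. The degenerate case $\sigma^2 = 0$ is immediate since then $X = \mu$ almost surely and the left-hand side is $0$, so I would assume $\sigma^2 > 0$ from here on.

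The key observation is that for any real $t > 0$ we have $\{Y \leq -\alpha\} = \{t - Y \geq t + \alpha\}$, and since $t + \alpha > 0$, the monotonicity of squaring on the nonnegative reals gives the inclusion $\{t - Y \geq t + \alpha\} \subseteq \{(t-Y)^2 \geq (t+\alpha)^2\}$. Applying Markov's inequality to the nonnegative random variable $(t - Y)^2$ then yields
\[
\prob{Y \leq -\alpha} \;\leq\; \prob{(t - Y)^2 \geq (t+\alpha)^2} \;\leq\; \frac{\esp{(t-Y)^2}}{(t+\alpha)^2} \;=\; \frac{t^2 + \sigma^2}{(t+\alpha)^2},
\]
where the last step uses $\esp{(t-Y)^2} = t^2 - 2t\,\esp{Y} + \esp{Y^2} = t^2 + \sigma^2$.

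The final step is to optimize over the free parameter $t > 0$. Differentiating $t \mapsto (t^2 + \sigma^2)/(t+\alpha)^2$ shows the minimum is attained at $t = \sigma^2/\alpha$, and substituting this value makes the bound collapse to exactly $\sigma^2/(\sigma^2 + \alpha^2)$, which finishes the proof. There is no real obstacle here: the argument is short once one spots the trick of shifting by $t$ before squaring. The only point that needs a moment's attention is the set inclusion above, which genuinely requires $t + \alpha \geq 0$ — this is why we shift by a positive $t$ rather than an arbitrary real, and it is also the source of the asymmetry that makes Cantelli's bound one-sided (and hence stronger in that direction than a naive application of Chebyshev's inequality).
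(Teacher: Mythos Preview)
Your proof is correct and is the standard one-parameter argument for Cantelli's inequality. Note, however, that the paper does not actually prove this theorem: it is merely stated in the appendix as a classical probabilistic fact (alongside the multiplicative Chernoff bound and the clipping lemma from \cite{Lee:2024:Hardness}) and then invoked in the proof of \Cref{thm:ClaimA}. So there is no ``paper's own proof'' to compare against here --- your write-up simply supplies a self-contained justification that the paper chose to omit.
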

        \begin{theorem}[Multiplicative Chernoff Bound]
            \label{thm:mult-chernoff-2}
            Let $X_1, \ldots, X_m$ be i.i.d Bernoulli random variables. Let $S = \sum_{i = 1}^m X_i$ denote their sum and let $\mu = \expect{\ld S \rd}$. Then, for any $\delta > 0 $, we have that
            \begin{align*}
               \prob{S > (1+\delta)\mu} & < \lb \frac{e^{\delta}}{(1+\delta)^{1+\delta}} \rb^\mu.
            \end{align*}
        \end{theorem}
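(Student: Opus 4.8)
The plan is to use the standard exponential-moment (Chernoff--Laplace transform) method. First I would fix an arbitrary real $t > 0$ and, using that $x \mapsto e^{tx}$ is increasing together with Markov's inequality applied to the nonnegative random variable $e^{tS}$, write
\[
\prob{S > (1+\delta)\mu} \;\leq\; \prob{e^{tS} \geq e^{t(1+\delta)\mu}} \;\leq\; \frac{\expect{\ld e^{tS} \rd}}{e^{t(1+\delta)\mu}} .
\]

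Next I would estimate the moment generating function $\expect{\ld e^{tS} \rd}$. By independence it factors as $\prod_{i=1}^{m} \expect{\ld e^{tX_i} \rd}$, and for a Bernoulli random variable with mean $p_i$ one has $\expect{\ld e^{tX_i} \rd} = 1 + p_i(e^t - 1) \leq \exp\lb p_i(e^t - 1) \rb$, by the elementary inequality $1 + u \leq e^u$. Multiplying these bounds and using $\sum_i p_i = \mu$ gives $\expect{\ld e^{tS} \rd} \leq \exp\lb (e^t - 1)\mu \rb$, and hence
\[
\prob{S > (1+\delta)\mu} \;\leq\; \exp\!\Big( \mu \big( e^t - 1 - t(1+\delta) \big) \Big) .
\]

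It remains to minimise the exponent over the admissible range $t > 0$. Setting $h(t) = e^t - 1 - t(1+\delta)$, we have $h'(t) = e^t - (1+\delta)$, which vanishes at $t^{\ast} = \ln(1+\delta)$; this value is strictly positive precisely because $\delta > 0$, so it is a legitimate choice, and $h'' > 0$ confirms it is a minimiser. Substituting $e^{t^{\ast}} = 1+\delta$ gives $h(t^{\ast}) = \delta - (1+\delta)\ln(1+\delta)$, which yields
\[
\prob{S > (1+\delta)\mu} \;\leq\; \lb \frac{e^{\delta}}{(1+\delta)^{1+\delta}} \rb^{\mu} .
\]

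Finally I would upgrade this to the strict inequality in the statement. If $\mu > 0$ then some $p_i > 0$, and for that index the step $1 + p_i(e^{t^{\ast}} - 1) < \exp\lb p_i(e^{t^{\ast}} - 1) \rb$ is strict because its argument $p_i\delta$ is nonzero, so the whole chain is strict; and if $\mu = 0$ then $S = 0$ almost surely, so the left-hand side is $0$ while the right-hand side equals $1$. There is no genuine obstacle in this argument — it is entirely routine — and the only point requiring a moment's care is verifying that the optimal parameter $t^{\ast} = \ln(1+\delta)$ lies in the feasible region $t > 0$, which is exactly where the hypothesis $\delta > 0$ enters.
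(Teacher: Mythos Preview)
Your proof is correct and follows the classical exponential-moment argument. The paper does not actually prove this theorem: it is listed in the appendix of standard probability facts and invoked as a known result without proof, so there is no ``paper's own proof'' to compare against. Your argument is exactly the textbook derivation one would cite for it, including the care taken with the strict inequality.
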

        \begin{theorem}[Theorem 6 \cite{Lee:2024:Hardness}]
            \label{thm:clip-bound}
            Let $X_1, \ldots, X_m$ be i.i.d Bernoulli random variables with mean at most $\mu$ and let $S = \sum_{i \in [m]} X_i$. Then, for any integer $\tau > \mu m$, we have that
            \begin{align*}
                \esp{S - \min\{S, \tau\}} \leq \lb \frac{\mu m }{ \tau - \mu m} \rb^2.
            \end{align*}
        \end{theorem}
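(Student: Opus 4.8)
The plan is to introduce the function $g(t):=\mathbb{E}\bigl[\max\{S-t,0\}\bigr]=\mathbb{E}\bigl[S-\min\{S,t\}\bigr]$, derive a one‑step recursion relating $g(\tau)$ to $g(\tau-1)$, and then close the recursion with Markov's inequality. Throughout write $\Lambda:=\mathbb{E}[S]=\sum_i\mathbb{E}[X_i]\le \mu m$; since $\tau$ is an integer with $\tau>\mu m\ge\Lambda\ge 0$, we have $\tau\ge 1$ and $0\le\Lambda/\tau<1$. Because $x\mapsto x^2/(\tau-x)^2$ is increasing on $[0,\tau)$, it suffices to prove $g(\tau)\le \Lambda^2/(\tau-\Lambda)^2$ and then replace $\Lambda$ by its upper bound $\mu m$.

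First I would record the pointwise inequality $\max\{S-\tau,0\}\le \tfrac{S}{\tau}\max\{S-\tau,0\}$, which is immediate (both sides vanish when $S\le\tau$, and $S/\tau\ge 1$ otherwise). Taking expectations and expanding $S=\sum_i X_i$,
\[
g(\tau)\;\le\;\frac1\tau\,\mathbb{E}\!\left[S\max\{S-\tau,0\}\right]\;=\;\frac1\tau\sum_i\mathbb{E}\!\left[X_i\max\{S-\tau,0\}\right].
\]
For a fixed $i$, writing $S_{-i}=S-X_i$: on $\{X_i=0\}$ the summand is $0$, and on $\{X_i=1\}$ it equals $\max\{S_{-i}-(\tau-1),0\}$, so $X_i\max\{S-\tau,0\}=X_i\max\{S_{-i}-(\tau-1),0\}$ identically. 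Using independence of $X_i$ and $S_{-i}$, then $S_{-i}\le S$ together with monotonicity of $x\mapsto\max\{x-(\tau-1),0\}$,
\[
\mathbb{E}\!\left[X_i\max\{S-\tau,0\}\right]\;=\;\mathbb{E}[X_i]\,\mathbb{E}\!\left[\max\{S_{-i}-(\tau-1),0\}\right]\;\le\;\mathbb{E}[X_i]\,g(\tau-1).
\]
Summing over $i$ gives the key recursion $g(\tau)\le\tfrac{\Lambda}{\tau}\,g(\tau-1)$.

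Next, I would use that $\tau$ and $S$ are integer‑valued to get the identity $\max\{S-(\tau-1),0\}=\max\{S-\tau,0\}+\mathbf{1}\{S\ge\tau\}$ (verify the three cases $S\ge\tau$, $S=\tau-1$, $S\le\tau-2$), hence $g(\tau-1)=g(\tau)+\mathbb{P}[S\ge\tau]$. Substituting into the recursion and rearranging (legitimate since $1-\Lambda/\tau>0$),
\[
g(\tau)\Bigl(1-\tfrac{\Lambda}{\tau}\Bigr)\;\le\;\tfrac{\Lambda}{\tau}\,\mathbb{P}[S\ge\tau],\qquad\text{so}\qquad g(\tau)\;\le\;\frac{\Lambda}{\tau-\Lambda}\,\mathbb{P}[S\ge\tau].
\]
Finally Markov's inequality gives $\mathbb{P}[S\ge\tau]\le\mathbb{E}[S]/\tau=\Lambda/\tau$, so $g(\tau)\le\tfrac{\Lambda^2}{\tau(\tau-\Lambda)}\le\tfrac{\Lambda^2}{(\tau-\Lambda)^2}$, which together with the monotonicity remark and $\Lambda\le\mu m$ yields the claimed bound.

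I do not anticipate a real obstacle; the whole argument is elementary and self‑contained. The one delicate point is the integrality bookkeeping — the identity for $g(\tau-1)$ genuinely needs $\tau,S\in\mathbb{Z}$ — and the observation that one must bound $\mathbb{P}[S\ge\tau]$ by $\Lambda/\tau$ rather than by $1$: the latter would only yield the weaker $g(\tau)\le\Lambda/(\tau-\Lambda)$, whereas the extra Markov factor is exactly what produces the squared denominator in the statement. I would also remark in passing that identical distribution of the $X_i$ is never used — only independence together with $\sum_i\mathbb{E}[X_i]\le\mu m$ — so the bound holds in that slightly greater generality.
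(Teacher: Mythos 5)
The paper cites this statement as Theorem~6 of Lee and Manurangsi and does not reproduce a proof, so there is no in-paper argument to compare against; I evaluate your proof on its own merits.

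Your proof is correct. With $\Lambda := \mathbb{E}[S]$, the pointwise bound $\max\{S-\tau,0\} \leq (S/\tau)\max\{S-\tau,0\}$, the factoring identity $X_i\max\{S-\tau,0\} = X_i\max\{S_{-i}-(\tau-1),0\}$ together with independence of $X_i$ and $S_{-i}$, and the monotonicity $S_{-i}\leq S$ correctly yield the recursion $g(\tau)\leq(\Lambda/\tau)\,g(\tau-1)$. The telescoping identity $g(\tau-1)=g(\tau)+\mathbb{P}[S\geq\tau]$ is indeed valid precisely because both $S$ and $\tau$ are integers, and dividing by $1-\Lambda/\tau$ is legitimate since $\tau>\mu m\geq\Lambda$. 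Closing with Markov gives $g(\tau)\leq \Lambda^2/(\tau(\tau-\Lambda))\leq \Lambda^2/(\tau-\Lambda)^2$, and the final replacement of $\Lambda$ by $\mu m$ is justified because $x\mapsto x/(\tau-x)$ is increasing on $[0,\tau)$ and $\Lambda\leq\mu m<\tau$. Two of your side remarks are worth keeping: the argument actually proves the marginally stronger bound $\Lambda^2/(\tau(\tau-\Lambda))$, and identical distribution is never used — only independence and $\sum_i\mathbb{E}[X_i]\leq\mu m$ — which is exactly the generality in which the theorem gets applied in \Cref{thm:ClaimB} (there $\deg_{G_1}(u)$ is a sum of independent Bernoullis that are all $\mathrm{Bern}(p)$, so iid does hold, but it is good to know the hypothesis is not load-bearing).
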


\end{document}